\newtheorem{theorem}{Theorem}
\newtheorem{lemma}{Lemma}
\newtheorem{remark}{Remark}
\newtheorem{definition}{Definition}
\title{Distributed Pinning Set Stabilization of Large-Scale Boolean Networks}
\author{Shiyong~Zhu, Jianquan Lu, \IEEEmembership{Senior Member, IEEE}, Liangjie Sun, and Jinde Cao, \IEEEmembership{Fellow, IEEE} % <-this % stops a space
\thanks{Corresponding author: Jianquan Lu}
\thanks{Shiyong Zhu and Jianquan Lu are with the Department of Systems Science, School of Mathematics, Southeast University, Nanjing 210096, China (email: zhusy0904@gmail.com; jqluma@seu.edu.cn).}
\thanks{Liangjie Sun is with the Advanced Modeling and Applied Computing Laboratory, Department of Mathematics, The University of Hong Kong, Hong Kong, China.}
\thanks{Jinde Cao is with the School of Mathematics, Frontiers Science Center for Mobile Information Communication and Security, Southeast University, Nanjing 210096, China, and also with the Purple Mountain Laboratories, Nanjing 211111, China, and also with the Yonsei Frontier Lab, Yonsei University, Seoul 03722, South Korea (e-mail: jdcao@seu.edu.cn).}}
\begin{document}
\maketitle
\thispagestyle{empty}
\pagestyle{empty}
%%%%%%%%%%%%%%%%%%%%%%%%%%%%%%%%%%%%%%%%%%%%%%%%%%%%%%%%%%%%%%%%%%%%%%%%%%%%%%%%%%%%%%%%%%%%%%%%%%%%%%%%%%%%%%%%%%%%%%%%%%%%%%%%%%%%%%%%%%%%%%%%%%%%%%%%%%%
\begin{abstract}
  In this article, we design the distributed pinning controllers to globally stabilize a Boolean network (BN), specially a sparsely connected large-scale one, towards a preassigned subset of state space \textcolor[rgb]{1,0,0}{through} the node-to-node message exchange. Given an \textcolor[rgb]{1,0,0}{appointed} state set, system nodes are partitioned into two \textcolor[rgb]{1,0,0}{disjoint} parts, \textcolor[rgb]{1,0,0}{which respectively gather the nodes whose states are fixed or arbitrary with respect to the given state set.} \textcolor[rgb]{1,0,0}{With such node division, three parts of pinned nodes are selected and the state feedback controllers are accordingly designed such that the resulting BN satisfies three conditions: the states of the other nodes cannot affect the nodal dynamics of fixed-state nodes, the subgraph of network structure induced by the fixed-state nodes is acyclic, and the steady state of the subnetwork induced by the fixed-state nodes lies in the state set given beforehand. If the BN after control is acyclic, the stabilizing time is revealed to be no more than the length of the longest path in the current network structure plus one. This enables us to further design the pinning controllers with the constraint of stabilizing time.} Noting that the overall procedure runs in an exponentially increasing time with respect to the largest number of functional variables in the dynamics of pinned nodes, the sparsely-connected large-scale BNs can be well addressed in a reasonable amount of time. Finally, we demonstrate the applications of our theoretical results in a T-LGL survival signal network with $29$ nodes \textcolor[rgb]{1,0,0}{and T-cell receptor signaling network with $90$ nodes}.
\end{abstract}

\begin{IEEEkeywords}
Boolean networks, set stabilization, distributed pinning controller, complexity reduction, semi-tensor product of matrices.
\end{IEEEkeywords}

%%%%%%%%%%%%%%%%%%%%%%%%%%%%%%%%%%%%%%%%%%%%%%%%%%%%%%%%%%%%%%%%%%%%%%%%%%%%%%%%%%%%%%%%
\section{Introduction}\label{section-introduction}
Since Kauffman proposed a logically coupled model with binary states--Boolean network (BN)--to describe the qualitative state evolution of gene regulatory networks in $1969$ (cf. \cite{kauffman1969jtb437}), it has received a surge of interest from the scientists in an amount of areas, such as systems biology \cite{akutsu2007jtb670}, multi-agent cooperation \cite{fagiolini2013auto2339}, power engineering \cite{power_systems}, as well as game theory \cite{chengdz2018auto51}. \textcolor[rgb]{1,0,0}{Formally, the nodal dynamics of an $n$ nodes' BN can be described as follows:}
\begin{equation}\label{equ-BN}
{\bm x}_j(t+1)={\bm f}_j([{\bm x}_i(t)]_{i\in \mathbf{N}_j}),~j=1,2,\cdots,n,
\end{equation}
where $t=1,2,\cdots$ are the discrete time instants; ${\bm x}_j\in \mathscr{D}:=\{1,0\}$ is the state variable of node $j$; $\mathbf{N}_j$ is the \textcolor[rgb]{1,0,0}{list}\footnote{\textcolor[rgb]{1,0,0}{In this article, we term a set with elements ranking in an increasing order as a list.}} of the indices of all functional variables of ${\bm f}_j$; \textcolor[rgb]{1,0,0}{and abbreviation $[{\bm x}_i(t)]_{i\in \mathbf{N}_j}$ is the queue of variables ${\bm x}_i(t)$ with subscripts belonging to the list $\mathbf{N}_j$ and ordering the same as in $\mathbf{N}_j$.}

About a decade ago, a systematic theoretical framework--algebraic state space representation (ASSR) \textcolor[rgb]{1,0,0}{approach}--for BNs was proposed by Cheng {\em et al.} based on the semi-tensor product (STP) of matrices \textcolor[rgb]{1,0,0}{(cf. monograph \cite{chengdz2011springer} and the references therein). Rewriting} each state \textcolor[rgb]{1,0,0}{variable} in its canonical form, the $n$-dimensional system state ${\bm x}(t):=({\bm x}_1(t),{\bm x}_2(t),\cdots, {\bm x}_n(t))$ can be one-to-one represented as a canonical vector $x(t)$ \textcolor[rgb]{1,0,0}{of order $2^n$. With this representation and the properties of the STP of matrices,} BN (\ref{equ-BN}) can be equivalently expressed \textcolor[rgb]{1,0,0}{into the following standard linear time-invariant system, which is referred to as the ASSR of BN (\ref{equ-BN}):}
\begin{equation}\label{equ-assr}
x(t+1)=Lx(t)
\end{equation}
where logical matrix $L$ of size $2^n\times 2^n$ is called the network transition matrix of BN (\ref{equ-BN}). \textcolor[rgb]{1,0,0}{To date, expression (\ref{equ-assr}) facilitates the emergence of} many remarkable results including but not limited to controllability \cite{margaliot2012aut1218,liuy2015auto340}, observability \cite{Valcher2012TAC1390,Margaliot2013Observability2351,zhusy2021tac}, stabilization \cite{valcher2015aut21,lujq2018siam4385,chenhw2020tac,zhusy2019tac}, synchronization \cite{lir2012tnnls840}, and robustness analysis \cite{liht2020siam3632,liht2021tac1231}. \textcolor[rgb]{1,0,0}{However, noting that the size of matrix $L$ is $2^n\times 2^n$, the time complexity of the theoretical results \textcolor[rgb]{1,0,0}{built on ASSR approach} will be $O(2^n)$ at least except that some complexity reduction techniques are adopted beforehand like network aggregation. As a result, the $O(2^n)$-barrier limits the applicability of ASSR approach to the large-scale BNs in practice.} Taking the controllability of BNs as an example, the STP toolbox provided by Cheng {\em et al.} can only deal with the controllability of BNs with nodes $n\leq25$ or so in a reasonable amount of time if utilizing the input-state transition matrix derived in \cite{zhaoy2010scl767}.

\textcolor[rgb]{1,0,0}{In this article, we consider the controllers design problem for the set stabilization--which was proposed by Guo {\em et al.} \cite{guoyq2015auto106} at first in the area of BNs--of large-scale BNs when the target domain is a subset of state space given beforehand rather than a single state. One motivation of set stabilization is the biological observation that only a part of essential nodes in almost networks should be taken care specially for large-scale BNs, in which case the considered BN is only required to be globally stabilized at a state set.} Moreover, several frequent problems of BNs can be thought of as \textcolor[rgb]{1,0,0}{the variations} of set stabilization, including synchronization \cite{lir2012tnnls840}, output tracking \cite{liht2015aut54}, and output regulation \cite{liht2017tac2993}. However, \textcolor[rgb]{1,0,0}{to our best knowledge, the analysis and controller synthesis on the set stabilization of BNs and its variations also have the $O(2^n)$-barrier of the time complexity.}

\textcolor[rgb]{1,0,0}{In recent years, the pinning control design has attracted much attention in the field of BNs (see, e.g., \cite{lujq2016ieeetac1658,Liff2016TNNLS1585,liff2018tnnls,liht2017jfi3039,liurj2017neuro142,zhongjie2019new,zhusy2020framework,zhusy2021controlgraph,zhusy2021sensor})}
and has been recognized because the whole dynamical behaviors of real networks with a mass of nodes can always be manipulated by controlling a small fraction of nodes \cite{liuyy2011nature167}. An illustrative assertion is the worm {\em C. elegans} with $297$ nerve cells, whose entire body can be provoked by average $49$ nodes occupying only about $17\%$ on total \cite{cho2011scientific}. \textcolor[rgb]{1,0,0}{In the area of BNs, injecting the control inputs on a preassigned part of nodes, the concept of pinning controllability appeared in \cite{lujq2016ieeetac1658} for the first time without consideration of the specific control form and the design procedure. A more concrete version of a pinning controlled BN, which was written in the following form, was provided in \cite{Liff2016TNNLS1585}:}
\begin{equation}\label{equ-BN-pin}
{\bm x}_j(t+1)=\left\{\begin{array}{l}
{\bm u}_j(t) \oplus_j {\bm f}_j({\bm x}(t)),~j\in\psi\\
{\bm f}_j({\bm x}(t)),~j\not\in\psi
\end{array}\right.
\end{equation}
where $\psi$ is the pinned node set; $\oplus_j$ is the logical operator that couples the control input and the original dynamics of node $j$; and input ${\bm u}_j(t)$ can be either open-loop control or state feedback control ${\bm u}_j(t)=\phi_j({\bm x}(t))$. \textcolor[rgb]{1,0,0}{In (\ref{equ-BN-pin}),} three configurations--pinned node set $\psi$, state feedback gain $\phi_j$ and logical coupling $\oplus_j$--\textcolor[rgb]{1,0,0}{are unknown and wait to be designed. For this purpose, changing certain columns of matrix $L$ in the noncontrolled BN (\ref{equ-assr}), the desired transition matrix $L'$ can be derived to achieve the anticipated performance. Finally, while pinning the nodes whose structure matrices decomposed from matrix $L'$ are different from the original ones,} the unknown components $\oplus_j$ and $\phi_j$ can be calculated by solving their structure matrices from a series of \textcolor[rgb]{1,0,0}{logical} matrix equations. Except for the global stabilization studied in \cite{Liff2016TNNLS1585}, such type of pinning controllers has been similarly applied to the set stabilization of BNs \cite{liurj2017neuro142} and probabilistic BNs \cite{liff2018tnnls}, and its variations \cite{liht2017jfi3039}. Hereafter, this type of design scheme for pinning controllers is termed the {\em $L$-based pinning approach}, for which, \textcolor[rgb]{1,0,0}{other than the exponential time complexity}, the obtained state feedback gains are always functional with respect to the majority of state variables such that the designed control form is a bit complicated.

\textcolor[rgb]{1,0,0}{Inspired by all aforementioned motivations, we shall be developing an efficient pinning control strategy that is capable of globally stabilizing large-scale BNs to a given state set efficiently. Our work is closest to the recent papers \cite{zhongjie2019new,zhusy2020framework,zhusy2021controlgraph,zhusy2021sensor}, all of which are dedicated to analyzing or controlling the global dynamical behaviors of large-scale BNs by using the ($n \times n$)-dimensional network structures even without available knowledge of nodal dynamics. In \cite{zhongjie2019new}, Zhong {\em et al.} have proposed a {\em distributed} pinning control approach, which is the prototype of the controllers we study here, for the global stabilization of BNs by resorting to a lemma disclosing that the acyclic network structure suffices to the global stability of BNs. By characterizing} the structural controllability of BNs without knowing their nodal dynamics, distributed pinning controllers have also been designed in \cite{zhusy2020framework} to make the considered BNs be controllable. Moreover, the structural observability of BNs has also been formalized in \cite{zhusy2021controlgraph}, and resulted in a new approach to design the sensors from the perspective of pinning observability (cf. \cite{zhusy2021controlgraph}, \cite{zhusy2021sensor}).

\textcolor[rgb]{1,0,0}{Despite the distributed pinning controllers succeed in overcoming the aforementioned limitations of $L$-based pinning approach, the design of such controllers relys on the network-structure-based criteria for the target dynamical behaviors (cf. e.g., \cite{zhongjie2019new,zhusy2020framework,zhusy2021controlgraph,zhusy2021sensor}). Thus, all the works cited above cannot deal with the set stabilization of BNs on account of the lack of the sufficient network-structure-based criterion for set stability. This motivates us to study the design of distributed pinning controllers for the set stabilization of BNs. We conclude the following three differences of this work in comparison with the $L$-based pinning approach to set stabilization (see, e.g., \cite{liff2018tnnls,liht2017jfi3039,liurj2017neuro142}) and the distributed pinning controllers for global stabilization \cite{zhongjie2019new}:}
\begin{itemize}
  \item[1)] Given a target state set beforehand, all the network nodes are partitioned into two disjoint node sets in accordance with their states with respect to the given set: the one gathers the nodes whose states can be arbitrary, and the other is a collection of fixed-state nodes. An additional part of nodes is pinned to cut off the data flow from the other nodes to the fixed-state nodes. \textcolor[rgb]{1,0,0}{Both of them are optional in \cite{zhongjie2019new}.}

  \item[2)] \textcolor[rgb]{1,0,0}{We reveal that, for any BN with acyclic network structure, the stabilizing time is no more than the length of the longest path in the network structure plus one. As a consequence, the distributed pinning controllers are designed with the constraint of stabilizing time; this} is not considered in \cite{zhongjie2019new}. Moreover, we first finish searching all the pinned nodes and then proceed to design the state feedback controllers and logical couplings; this results in a more concise form in controllers.

  \item[3)] Compared with the $L$-based pinning approach (see, e.g., \cite{liff2018tnnls,liht2017jfi3039,liurj2017neuro142}), the controllers designed here preserve all the advantages of the distributed pinning controllers, for which the application of network structures makes the time complexity to be $O(n^2+ms^2+s2^{\epsilon})$ and breaks the $O(2^n)$ barrier, where $s$ is the number of pinned nodes, $\epsilon$ is the maximum in-degree of pinned nodes and $m$ is the sum of number of functional variables for all fixed-state nodes. Some empirical observations for the sparse connection of biological network (cf., e.g., \cite{jeong2000nature}) facilitates the applicability of such controllers to large-scale BNs because of $\epsilon \ll n$. Moreover, all the pinned nodes can be determined in polynomial time. Last but not least, state feedback gain injected on each pinned node only relies on the in-neighbors' message rather than the global information.
\end{itemize}

The remainder of this article is arranged as follows. We start in Section \ref{section-preliminaries} by introducing some preliminaries including brief review of the STP of matrices and some basic notions and lemmas. Section \ref{section-newmethod} elaborates the design of the distributed pinning controllers. Two biological examples are considered in Section \ref{section-example}, while Section \ref{section-conclusion} concludes this article.

\textcolor[rgb]{1,0,0}{{\em Notations:} Let $\mathbb{N}$ and $\mathbb{R}$ be the sets of integers and real numbers, respectively. Given $m,n\in\mathbb{N}$ with $m<n$, set $\{m,m+1,\ldots,n\}$ is abbreviated as $[m,n]_{\mathbb{N}}$. Defining by $\delta_n^k$ the $k$th column of identity matrix $I_n$, set $\Delta_n$ is composed of all canonical vectors $\delta_n^k$ with $k\in[1,n]_{\mathbb{N}}$. We call $B:=(b_{ij})_{n\times n}$ a Boolean matrix of size $n \times n$ if $b_{ij}\in\mathscr{D}$. $\mathscr{L}_{m\times n}$ is defined by the set of all logical matrices of order $m\times n$, which can be written as $[\delta_m^{i_1},\delta_m^{i_2},\ldots,\delta_m^{i_n}]$ and abbreviated as $\delta_n[i_1,i_2,\cdots,i_n]$. Given a set $C$, $\mid C \mid$ denotes its cardinality. $+_{\mathscr{B}}$ and $\times_{\mathscr{B}}$ are Boolean addition and Boolean product, respectively.}

\section{Preliminaries}\label{section-preliminaries}
\subsection{STP of Matrices}
In this subsection, we briefly review the STP of matrices and its properties. For survey, we refer the readers to monograph \cite{chengdz2011springer} and the references therein.
\begin{definition}[See \cite{chengdz2011springer}]
Given a ($p\times q$)-dimensional matrix $U$ and an ($s\times t$)-dimensional matrix $V$, their STP is defined as $U\ltimes V = (U\otimes I_{\frac{\beta}{q}})(V\otimes I_{\frac{\beta}{s}})$, where $\beta$ is the least common multiple of $q$ and $s$.
\end{definition}

\textcolor[rgb]{1,0,0}{The STP of matrices becomes the traditional matrix product verbatim when $q=s$, so it extends the traditional matrix product to the product of non-matching-dimensions matrices. Such extension rationalizes the following properties of the STP of matrices.}
\begin{lemma}[See \cite{chengdz2011springer}]\label{lem-stp}
Given real matrices ${\bf a}$, ${\bf b}$ and $C$ of sizes $p \times 1$, $q \times 1$ and $s \times t$, respectively, it holds that
\begin{itemize}
  \item[1)] ${\bf a} \ltimes C=(I_p \otimes C)\ltimes {\bf a}$;
  \item[2)] ${\bf a}\ltimes {\bf b}=W_{[q,p]} \ltimes {\bf b} \ltimes {\bf a}$, where $W_{[q,p]}:=[I_p\otimes\delta_q^1,I_p\otimes\delta_q^2, \ldots, I_p\otimes\delta_q^q]$ is the ($qp\times qp$)-dimensional swap matrix;
  \item[3)] ${\bf a} \ltimes {\bf a}=\Phi_p \ltimes {\bf a}$, where $\Phi_p:=[\delta_p^1\ltimes \delta_p^1, \delta_p^2\ltimes\delta_p^2,\ldots,\delta_p^p\ltimes\delta_p^p]$ is the power-reducing matrix for the $p$-ordered column vector.
\end{itemize}
\end{lemma}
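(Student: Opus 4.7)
The plan is to reduce all three identities to the single Kronecker-product rule $(A\otimes B)(C\otimes D)=AC\otimes BD$ whenever the inner dimensions match, which is essentially the only non-trivial ingredient behind elementary STP manipulations of column vectors. In each case I would unfold the left-hand STP from the definition, match the dimensions, and then simplify both sides to a common Kronecker form.

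For part 1), since ${\bf a}$ has a single column, unfolding $\ltimes$ gives ${\bf a}\ltimes C=({\bf a}\otimes I_s)(C\otimes I_1)=({\bf a}\otimes I_s)(1\otimes C)={\bf a}\otimes C$. For the right-hand side, $I_p\otimes C$ has $pt$ columns and ${\bf a}$ has $p$ rows, so taking $\beta=pt$ yields $(I_p\otimes C)\ltimes{\bf a}=(I_p\otimes C)({\bf a}\otimes I_t)=(I_p{\bf a})\otimes(CI_t)={\bf a}\otimes C$, matching the left-hand side.

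For part 2), the same computation produces ${\bf a}\ltimes{\bf b}={\bf a}\otimes{\bf b}$ and ${\bf b}\ltimes{\bf a}={\bf b}\otimes{\bf a}$, so the identity reduces to the swap relation $W_{[q,p]}({\bf b}\otimes{\bf a})={\bf a}\otimes{\bf b}$. I would verify this on basis pairs $({\bf a},{\bf b})=(\delta_p^i,\delta_q^j)$: by the definition $W_{[q,p]}=[I_p\otimes\delta_q^1,\ldots,I_p\otimes\delta_q^q]$ its $((j-1)p+i)$-th column is $\delta_p^i\otimes\delta_q^j={\bf a}\otimes{\bf b}$, and bilinearity extends the relation to arbitrary vectors. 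For part 3), specializing ${\bf b}={\bf a}$ in the previous step gives ${\bf a}\ltimes{\bf a}={\bf a}\otimes{\bf a}$, and the column-wise definition $\Phi_p=[\delta_p^1\otimes\delta_p^1,\ldots,\delta_p^p\otimes\delta_p^p]$ makes $\Phi_p\delta_p^i=\delta_p^i\otimes\delta_p^i$ immediate for every canonical ${\bf a}=\delta_p^i\in\Delta_p$.

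There is no genuine obstacle here: every step is a dimension-bookkeeping exercise. The one subtlety worth flagging explicitly at the start of the proof of part 3) is that the power-reducing identity fails for arbitrary real ${\bf a}$, since $\Phi_p{\bf a}$ is supported only on the indices $(i-1)p+i$ whereas ${\bf a}\otimes{\bf a}$ is generically dense; thus part 3) must be read under the tacit STP/ASSR convention ${\bf a}\in\Delta_p$, which is precisely the setting in which the lemma will be invoked when encoding Boolean states as canonical vectors later in the paper.
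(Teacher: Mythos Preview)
The paper does not supply its own proof of this lemma: it is quoted verbatim from the monograph \cite{chengdz2011springer} and treated as a standard background fact, so there is no argument in the paper to compare against. Your derivation is correct and is exactly the standard way these identities are obtained---reduce each STP to its Kronecker normal form via the definition and then apply $(A\otimes B)(C\otimes D)=AC\otimes BD$.

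Your flag on part~3) is well taken and worth stating clearly: as written, the lemma says ``real matrices'', but the identity $\mathbf{a}\ltimes\mathbf{a}=\Phi_p\ltimes\mathbf{a}$ is linear in $\mathbf{a}$ on the right and quadratic on the left, so it cannot hold for generic $\mathbf{a}\in\mathbb{R}^p$; it is only valid for $\mathbf{a}\in\Delta_p$. This is indeed the tacit convention throughout the ASSR framework and is the only way the lemma is ever invoked in the paper (namely with $\mathbf{a}=\ltimes_{i\in\mathbf{N}_j}x_i(t)\in\Delta_{2^{|\mathbf{N}_j|}}$ when deriving \eqref{equ-logicalequationI}), so there is no downstream issue, but noting it explicitly is good practice.
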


Defining the canonical form of variable ${\bm x}_i$ by $\zeta(\textcolor[rgb]{1,0,0}{{\bm x}_i}):=\delta_2^{2-{\bm x}_i}$, any logical function can be rewritten in its multilinear form.
\begin{lemma}[See \cite{chengdz2011springer}]\label{lem-strMatrix}
Given a logical function ${\bm f}({\bm x}_1,{\bm x}_2,\cdots,{\bm x}_n):\mathscr{D}^n\rightarrow\mathscr{D}$, there exists a unique logical matrix $S_{{\bm f}}\in\mathscr{L}_{2\times 2^n}$ such that $\zeta({\bm f}({\bm x}_1,{\bm x}_2,\cdots,{\bm x}_n))=S_{{\bm f}}\ltimes x_1 \ltimes x_2 \ltimes\cdots \ltimes x_n$, where $x_i:=\zeta(\textcolor[rgb]{1,0,0}{{\bm x}_i})$, $i\in[1,n]_{\mathbb{N}}$ and matrix $S_{{\bm f}}$ is called the structure matrix of function ${\bm f}$.
\end{lemma}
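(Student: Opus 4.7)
The plan is to construct $S_{\bm f}$ explicitly by exploiting the bijection between the set of binary tuples $\mathscr{D}^n$ and the set of canonical vectors $\Delta_{2^n}$. First I would verify that for any $(x_1,\ldots,x_n)\in\Delta_2^{\,n}$, the STP $x_1\ltimes x_2\ltimes\cdots\ltimes x_n$ is itself a canonical vector in $\Delta_{2^n}$. This follows by a short induction on $n$: for $n=1$ it is trivial, and the inductive step uses the dimension matching of the STP so that $\delta_{2^{n-1}}^{k}\ltimes\delta_2^{\ell}=\delta_{2^{n-1}}^{k}\otimes\delta_2^{\ell}=\delta_{2^n}^{2(k-1)+\ell}$ after recognizing that the STP reduces to the Kronecker product when the dimensions are compatible. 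In particular, if ${\bm x}_i\in\{1,0\}$ and $x_i=\zeta({\bm x}_i)=\delta_2^{2-{\bm x}_i}$, then the index $j$ of $x_1\ltimes\cdots\ltimes x_n=\delta_{2^n}^{j}$ is given by $j=\sum_{i=1}^{n}(1-{\bm x}_i)2^{n-i}+1$. This yields a bijection $\Phi:\mathscr{D}^n\to\Delta_{2^n}$, $({\bm x}_1,\ldots,{\bm x}_n)\mapsto \delta_{2^n}^{j}$, so that every canonical vector in $\Delta_{2^n}$ is realized exactly once by some binary assignment.

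Next I would define $S_{\bm f}$ column by column. For each $j\in[1,2^n]_{\mathbb{N}}$, let $({\bm x}_1^{(j)},\ldots,{\bm x}_n^{(j)})\in\mathscr{D}^n$ be the unique tuple with $\Phi({\bm x}_1^{(j)},\ldots,{\bm x}_n^{(j)})=\delta_{2^n}^{j}$, and set the $j$-th column of $S_{\bm f}$ to be $\zeta({\bm f}({\bm x}_1^{(j)},\ldots,{\bm x}_n^{(j)}))\in\Delta_2$. Arranging these $2^n$ canonical columns side by side yields a matrix $S_{\bm f}\in\mathscr{L}_{2\times 2^n}$ by definition of $\mathscr{L}_{2\times 2^n}$. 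For any assignment $({\bm x}_1,\ldots,{\bm x}_n)$ with $\Phi({\bm x}_1,\ldots,{\bm x}_n)=\delta_{2^n}^{j}$, the product $S_{\bm f}\ltimes x_1\ltimes\cdots\ltimes x_n=S_{\bm f}\delta_{2^n}^{j}$ selects exactly the $j$-th column, which by construction equals $\zeta({\bm f}({\bm x}_1,\ldots,{\bm x}_n))$. This establishes existence.

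For uniqueness, suppose $S,S'\in\mathscr{L}_{2\times 2^n}$ both satisfy the identity. Evaluating at the binary tuple corresponding to $\delta_{2^n}^{j}$ gives $S\delta_{2^n}^{j}=S'\delta_{2^n}^{j}$ for every $j$, so all columns coincide and $S=S'$. The main obstacle I anticipate is the bookkeeping in the first paragraph: one must be careful that STP truly reduces to a Kronecker product here (it does, because each $x_i$ is a column of compatible height under the least-common-multiple rule from the definition) and that the induced ordering of the binary expansion matches the conventional indexing of $\Delta_{2^n}$. Everything else is a routine consequence of the bijection $\Phi$ and the column-selecting property of logical matrices acting on canonical vectors.
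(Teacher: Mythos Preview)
Your argument is correct and is the standard construction of the structure matrix of a logical function. Note, however, that the paper does not prove this lemma at all: it is stated with the attribution ``[See \cite{chengdz2011springer}]'' and taken as a known result from Cheng \emph{et~al.}'s monograph, so there is no in-paper proof to compare against. Your explicit index formula $j=\sum_{i=1}^{n}(1-{\bm x}_i)2^{n-i}+1$ and the column-by-column definition of $S_{\bm f}$ reproduce exactly the argument given in that reference; the bookkeeping you flag (that the STP of canonical vectors reduces to the Kronecker product and hence stays canonical) is indeed the only point requiring care, and you have handled it correctly.
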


\subsection{Basic Notions and Lemmas}
\begin{definition}\label{def-functional}
For a logical function ${\bm f}({\bm x}_1,{\bm x}_2,\ldots,{\bm x}_n):\mathscr{D}^n\rightarrow\mathscr{D}$, variable ${\bm x}_j$ therein is said to be functional if there exists a vector $(\eta_1,\eta_2,\ldots,\eta_n)\in\mathscr{D}^n$ satisfying ${\bm f}(\eta_1,\ldots,\eta_{j-1},\eta_j,\eta_{j+1},\ldots,\eta_n) \neq {\bm f}(\eta_1,\cdots,\eta_{j-1},\overline{\eta_j},\eta_{j+1},\ldots,\eta_n)$.
\end{definition}

Given an integer $n\in\mathbb{N}$, for any index $i\in[1,n]_{\mathbb{N}}$, subset $\vec{\mathscr{L}}^i_{2\times 2^{n}}\subseteq\mathscr{L}_{2\times 2^n}$ consists of all the matrices $L\in \vec{\mathscr{L}}^i_{2\times 2^n}$ satisfying
$$LW_{[2,2^{i-1}]}\ltimes\delta_2^1 +_{\mathscr{B}} LW_{[2,2^{i-1}]}\ltimes\delta_2^2\in\mathscr{L}_{2 \times 2^{n-1}},$$
based on which, we define $\vec{\mathscr{L}}_{2\times 2^n}:=\{\vec{\mathscr{L}}^1_{2\times 2^{n}},\vec{\mathscr{L}}^2_{2\times 2^{n}},\ldots,\vec{\mathscr{L}}^n_{2\times 2^{n}}\}$ and $\vec{\mathscr{L}}^c_{2\times 2^n}:=\mathscr{L}_{2\times 2^n}\backslash\vec{\mathscr{L}}_{2\times 2^n}$. Then, we have the following lemma for logical function ${\bm f}({\bm x}_1,{\bm x}_2,\ldots,{\bm x}_n)$ whose functional variables are ${\bm x}_{\varsigma_1},{\bm x}_{\varsigma_2},\ldots,{\bm x}_{\varsigma_{\sigma}}$ with $1\leq\varsigma_{1}<\varsigma_{2}<\cdots<\varsigma_{\sigma}\leq n$.
\begin{lemma}\label{lem-extract}
\textcolor[rgb]{1,0,0}{Given a logical function ${\bm f}({\bm x}_1,{\bm x}_2,\ldots,{\bm x}_n)$ with all functional variables being ${\bm x}_{\varsigma_1},{\bm x}_{\varsigma_2},\ldots,{\bm x}_{\varsigma_{\sigma}}$, there must exist a matrix $A\in\vec{\mathscr{L}}^c_{2\times 2^{n}}$ such that $S_{{\bm f}}W=A(I_{2^{\sigma}}\otimes {\bf 1}^\top_{2^{n-\sigma}})$ holds for the structure matrix $S_{\bm f}$ of function ${\bm f}$, and vice versa, where $W:=W_{[2,2^{\varsigma_{\sigma}-1}]} \ltimes W_{[2,2^{\varsigma_{\sigma-1}}]} \ltimes W_{[2,2^{\varsigma_{\sigma-2}+1}]} \ltimes \cdots \ltimes W_{[2,2^{\varsigma_{1}+\sigma-2}]}$.}
\end{lemma}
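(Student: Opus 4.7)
The plan is to exploit the fact that $\bm{f}$ depends only on its functional variables, which lets me rewrite $\bm{f}$ as an effective $\sigma$-variable function $\bm{g}$ and then translate between the two structure matrices via the swap matrix $W$ and a Kronecker contraction by $\mathbf{1}^\top_{2^{n-\sigma}}$. First I would introduce the auxiliary logical function $\bm{g}:\mathscr{D}^\sigma\to\mathscr{D}$ by setting $\bm{g}(y_1,\ldots,y_\sigma):=\bm{f}(\eta_1,\ldots,\eta_n)$, where $\eta_{\varsigma_k}=y_k$ for each $k$ and the remaining coordinates are fixed arbitrarily; Definition~\ref{def-functional} guarantees that these arbitrary choices do not affect the value of $\bm{f}$, so $\bm{g}$ is well defined and the identity $\bm{f}(\bm{x}_1,\ldots,\bm{x}_n)=\bm{g}(\bm{x}_{\varsigma_1},\ldots,\bm{x}_{\varsigma_\sigma})$ holds pointwise. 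By construction every variable of $\bm{g}$ is functional, so the structure matrix $A:=S_{\bm{g}}$ supplied by Lemma~\ref{lem-strMatrix} lies in $\vec{\mathscr{L}}^c_{2\times 2^\sigma}$, because the defining criterion $LW_{[2,2^{i-1}]}\ltimes\delta_2^1+_{\mathscr{B}}LW_{[2,2^{i-1}]}\ltimes\delta_2^2\in\mathscr{L}_{2\times 2^{\sigma-1}}$ is precisely the algebraic transcription of ``coordinate $i$ can be toggled without altering the output.''

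Next, I would use Lemma~\ref{lem-stp}(2) iteratively to reorganise the Kronecker product $x_1\ltimes x_2\ltimes\cdots\ltimes x_n$ so as to pull the functional variables to the front, establishing
\[
x_1\ltimes x_2\ltimes\cdots\ltimes x_n \;=\; W\ltimes x_{\varsigma_1}\ltimes\cdots\ltimes x_{\varsigma_\sigma}\ltimes x_{r_1}\ltimes\cdots\ltimes x_{r_{n-\sigma}},
\]
where $r_1<r_2<\cdots<r_{n-\sigma}$ enumerate the non-functional indices. The strategy is a finite induction: one brings $x_{\varsigma_\sigma}$, then $x_{\varsigma_{\sigma-1}}$, and so on down to $x_{\varsigma_1}$ forward, one at a time. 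The $k$-th step requires pushing a single $2$-vector past an intermediate left-block of size $2^{\varsigma_{\sigma-k+1}+k-2}$, which is exactly what the $k$-th factor $W_{[2,2^{\varsigma_{\sigma-k+1}+k-2}]}$ in the stated definition of $W$ accomplishes. Verifying that the exponents $\varsigma_\sigma-1,\varsigma_{\sigma-1},\varsigma_{\sigma-2}+1,\ldots,\varsigma_1+\sigma-2$ really match the block sizes produced by this induction is the main bookkeeping obstacle; everything else is formal manipulation.

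Finally, I would assemble the two ingredients via the contraction identity $(I_{2^\sigma}\otimes\mathbf{1}^\top_{2^{n-\sigma}})\ltimes y\ltimes z=y$ valid for every $y\in\Delta_{2^\sigma}$ and $z\in\Delta_{2^{n-\sigma}}$, which holds because $\mathbf{1}^\top_{2^{n-\sigma}}\ltimes z=1$ whenever $z$ is a canonical vector. Chaining this with the reduction $\bm{f}=\bm{g}$ and the rearrangement above yields
\[
S_{\bm{f}}W\ltimes x_{\varsigma_1}\ltimes\cdots\ltimes x_{r_{n-\sigma}} \;=\; A(I_{2^\sigma}\otimes\mathbf{1}^\top_{2^{n-\sigma}})\ltimes x_{\varsigma_1}\ltimes\cdots\ltimes x_{r_{n-\sigma}},
\]
and since this equality holds for every choice of canonical vectors, whose products $\delta_2^{i_1}\ltimes\cdots\ltimes\delta_2^{i_n}$ exhaust a basis of $\mathbb{R}^{2^n}$, the desired matrix identity $S_{\bm{f}}W=A(I_{2^\sigma}\otimes\mathbf{1}^\top_{2^{n-\sigma}})$ follows. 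The converse direction simply runs this chain backwards: any factorisation of the stated form forces $\bm{f}$ to be independent of the $r_j$-indexed variables, while the assumption $A\in\vec{\mathscr{L}}^c$ guarantees that every $\varsigma_k$-indexed variable is genuinely functional, so $\{\varsigma_1,\ldots,\varsigma_\sigma\}$ is exactly the list of functional variables of $\bm{f}$.
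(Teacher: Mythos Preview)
Your proposal is correct and follows essentially the same approach as the paper: reorder the STP of variables via the swap matrix $W$ so that the functional variables come first, then exploit the independence of $\bm f$ on the remaining variables to factor $S_{\bm f}W$ through $I_{2^\sigma}\otimes\mathbf{1}^\top_{2^{n-\sigma}}$, with membership of $A$ in $\vec{\mathscr{L}}^c$ coming from the fact that all $\sigma$ variables of the reduced function are functional. The paper's argument is terser (it writes $S_{\bm f}W=A\otimes\mathbf{1}^\top_{2^{n-\sigma}}$ directly and declares the converse ``obvious''), while you spell out the auxiliary function $\bm g$, the contraction identity, and the converse direction explicitly; but the logical skeleton is the same.
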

\begin{proof}
\textcolor[rgb]{1,0,0}{By Lemma \ref{lem-stp}, it follows that
\begin{equation}\label{equ-lemma2.3-1}
\begin{aligned}
\zeta({\bm f}({\bm x}_1,{\bm x}_2,\ldots,{\bm x}_n))&=S_{{\bm f}}\ltimes_{i=1}^{n}x_i\\
&=S_{{\bm f}}W(\ltimes_{i=1}^{\sigma}x_{\varsigma_i})(\ltimes_{i\in[1,n]_{\mathbb{N}}\backslash[1,\sigma]_{\mathbb{N}}}x_i(t)).
\end{aligned}
\end{equation}}

\textcolor[rgb]{1,0,0}{Except for variables ${\bm x}_{\varsigma_1},{\bm x}_{\varsigma_2},\ldots,{\bm x}_{\varsigma_{\sigma}}$, the other variables are all nonfunctional. Thus, matrix $S_{\bm f}W$ must can be written as $S_{{\bm f}}W=A\otimes {\bf 1}^\top_{2^{n-\sigma}}$. Then, we have that
$$ \zeta({\bm f}({\bm x}_1,{\bm x}_2,\ldots,{\bm x}_n))=A(\ltimes_{i=1}^{\sigma}x_{\varsigma_i}). $$
For matrix $A$ therein, it does belong to set $\vec{\mathscr{L}}_{2 \times 2^n}$. Otherwise, some variables in ${\bm x}_{\varsigma_1},{\bm x}_{\varsigma_2},\ldots,{\bm x}_{\varsigma_{\sigma}}$ are non-functional. Thus, one has that}
$$ S_{{\bm f}}W= A\otimes {\bf 1}^\top_{2^{n-\sigma}} = A(I_{2^\sigma} \otimes {\bf 1}^\top_{2^{n-\sigma}}).$$

The converse conclusion is obvious.
\end{proof}

Given BN (\ref{equ-BN}), we write its incidence matrix as $I({\bm f}):=(I_{ij})_{n\times n}$, where $I_{ij}=1$ if ${\bm x}_j$ is a functional variable of ${\bm f}_i$, and  $I_{ij}=0$, otherwise. Since an incidence matrix is a Boolean one, it can be regarded as the adjacency matrix of the {\em network structure} of BN (\ref{equ-BN}), which is a digraph ${\bf G}:=({\bf V},{\bf E})$ with the vertex set ${\bf V}:=\{v_1,v_2,\cdots,v_n\}$ and the edge set ${\bf E}$ composed of arcs $e_{ij}$--which joint vertex $v_i$ to vertex $v_j$--for all $i$ and $j$ with $I_{ji}=1$. For every arc $e_{ij}\in\mathbf{E}$, its ending and starting vertices are respectively denoted by ${\bf O}^{+}_{e_{ij}}=\{v_j\}$ and ${\bf O}^{-}_{e_{ij}}=\{v_i\}$. \textcolor[rgb]{1,0,0}{Given vertices $v_i,v_j\in{\bf V}$, a path from $v_i$ to $v_j$ is a sequence $v_{k_0}v_{k_1}\ldots v_{k_q}$ with $v_{k_0}=v_i$, $v_{k_q}=v_j$ and $e_{k_ik_{i+1}}\in{\bf E}$. Specially, if $v_i=v_j$, it is called a cycle. Hereafter,} \textcolor[rgb]{1,0,0}{in order to distinguish a node of BN (\ref{equ-BN}) and its corresponding vertex in the network structure, we term them a {\em node} and a {\em vertex}}, respectively.

\begin{definition}[See \cite{guoyq2015auto106}]\label{def-setstability}
Given a state set ${\bm \Lambda}\subseteq\mathscr{D}^n$, BN (\ref{equ-BN-pin}) is said to be globally ${\bm \Lambda}$-stabilized by pinning control if starting from any initial state ${\bm x}_0:={\bm x}(0)\in\mathscr{D}^n$, there exists the minimal integer $\tau^\ast_{{\bm x}_0}\in\mathbb{N}$ depending on ${\bm x}_0$ such that ${\bm x}(t)\in{\bm \Lambda}$ holds for any $t\geq \tau^\ast_{{\bm x}_0}$. \textcolor[rgb]{1,0,0}{Furthermore,} number $\tau^\ast:=\max\{\tau^\ast_{{\bm x}_0} \mid {\bm x}_0\in\mathscr{D}^n\}$ is called the stabilizing time of BN (\ref{equ-BN-pin}) with respect to the set ${\bm \Lambda}$.
\end{definition}
\begin{remark}
Specially, if ${\bm \Lambda}$ only contains single state $\alpha$, BN (\ref{equ-BN-pin}) is said to be globally stabilized at $\alpha$ by pinning control.
\end{remark}

Then, a network-structure-based stability criterion of BN (\ref{equ-BN}) is presented. Although Lemma \ref{lemma-globalconvergence} is only sufficient, it suffices to design the pinning controllers for BN (\ref{equ-BN}).
\begin{lemma}[See \cite{robert2012discrete}]\label{lemma-globalconvergence}
BN (\ref{equ-BN-pin}) is globally stabilized at a certain state if its network structure is acyclic.
\end{lemma}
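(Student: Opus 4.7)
The plan is to exploit the acyclicity of the network structure via a topological layering of the vertices and then argue by induction on the layer index that each node becomes state-independent of the initial condition within a finite, layer-determined number of steps. The intuition is that in a DAG, sources have no functional variables and hence their update rule is a constant, so after one step they are pinned to a single value; nodes further downstream then see constants coming in and so on.

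First I would invoke a standard fact about DAGs: an acyclic digraph admits a layering ${\bf V}=\bigsqcup_{k=0}^{\ell}{\bf V}_k$, where ${\bf V}_0$ consists of the sources (vertices with empty in-neighborhood in $\mathbf{G}$) and ${\bf V}_k$ ($k\geq 1$) contains the vertices whose longest path from any source has length exactly $k$. Here $\ell$ is the length of the longest path in $\mathbf{G}$, which is finite because $\mathbf{G}$ is acyclic on finitely many vertices. Any in-neighbor of a vertex in ${\bf V}_k$ lies in $\bigcup_{j<k}{\bf V}_j$.

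Next I would prove by induction on $k\in[0,\ell]_{\mathbb{N}}$ the claim: for every $j$ with $v_j\in{\bf V}_k$ there is a constant $c_j\in\mathscr{D}$ such that ${\bm x}_j(t)=c_j$ for every $t\geq k+1$ and every initial state ${\bm x}_0\in\mathscr{D}^n$. For the base $k=0$, any $v_j\in{\bf V}_0$ has $\mathbf{N}_j=\varnothing$ in BN (\ref{equ-BN-pin}) (possibly after the pinning modification, because the modified dynamics still has the same acyclic structure by assumption), so ${\bm f}_j$ (or its pinned counterpart) is a constant function, yielding ${\bm x}_j(t)\equiv c_j$ for all $t\geq 1$. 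For the inductive step, assume the assertion for layers $0,1,\ldots,k-1$; every functional variable of the update rule of a node $v_j\in{\bf V}_k$ corresponds to an in-neighbor lying in those earlier layers, hence at time $t\geq k$ all such inputs are already equal to the constants supplied by the inductive hypothesis, so ${\bm x}_j(t+1)$ becomes a fixed value $c_j$ for every $t\geq k$, i.e.\ for every $t+1\geq k+1$.

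Applying the claim at $k=\ell$ gives ${\bm x}(t)=(c_1,c_2,\ldots,c_n)$ for every $t\geq \ell+1$ independently of ${\bm x}_0$; in the language of Definition \ref{def-setstability} with ${\bm \Lambda}=\{(c_1,\ldots,c_n)\}$, BN (\ref{equ-BN-pin}) is globally stabilized at this single state with stabilizing time at most $\ell+1$. The only delicate point is to be careful that Definition \ref{def-functional} really does forbid any dependence on non-functional variables, so that sources genuinely update to a constant; this is immediate from the definition but is the one place where one needs to avoid the trap of confusing ``no in-edge in $\mathbf{G}$'' with ``nominal dependence on ${\bm x}(t)$'', since the incidence matrix $I({\bm f})$ was defined precisely in terms of functional variables.
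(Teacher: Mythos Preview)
Your argument is correct. Note, however, that the paper does not supply its own proof of this lemma: it is simply quoted from \cite{robert2012discrete}, so there is no ``paper's proof'' to compare against directly.

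That said, your layering-plus-induction argument is worth contrasting with the technique the paper uses for the closely related Lemma~\ref{lem-within-time}. There, the paper invokes the contraction-type inequality~(\ref{equ-discrete-differ}) and observes that acyclicity with $\text{diam}({\bf G})=\tau$ forces the Boolean power $I_{\bm f}^{\tau+1}$ to vanish, whence any two trajectories coincide after $\tau+1$ iterations. Your approach is more elementary and self-contained: you never need the Boolean-metric machinery from \cite{robert2012discrete}, and as a by-product you already obtain the explicit stabilizing-time bound $\ell+1$ (with $\ell$ the longest-path length), which is exactly the content of Lemma~\ref{lem-within-time}. In effect, your single inductive argument proves both Lemma~\ref{lemma-globalconvergence} and Lemma~\ref{lem-within-time} simultaneously, whereas the paper separates them and leans on the external reference for the former. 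The algebraic route in the paper has the advantage of being uniform in $\mu$ (one inequality handles all initial states at once), while your combinatorial route makes the mechanism of stabilization---sources freeze first, then information propagates layer by layer---completely transparent. Your closing remark about Definition~\ref{def-functional} and the incidence matrix is apt and addresses the only point where a careless reader might stumble.
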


Finally, we present a stronger result that is adapted from Lemma \ref{lemma-globalconvergence} with the help of the following lemma. Define the diameter of an acyclic digraph, denoted by$\text{diam}(\cdot)$, by the length of the longest path in this digraph. The Hamming distance of any two vectors $\mu,\nu\in\mathscr{D}^n$ is defined as $\text{dist}(\mu,\nu):=\Sigma_{i=1}^{n} (\mu_i \bar{\vee} \nu_i)$.
\begin{lemma}[See \cite{robert2012discrete}]
For BN (\ref{equ-BN}), it holds that
\begin{equation}\label{equ-discrete-differ}
  \text{dist}({\bm f}(\mu),{\bm f}(\nu))\leq I_{\bm f} \times_{\mathscr{B}} \text{dist}(\mu,\nu).
\end{equation}
\end{lemma}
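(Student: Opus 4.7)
The plan is to establish inequality \eqref{equ-discrete-differ} componentwise, reading $\text{dist}(\mu,\nu)$ as the vector with $i$-th entry $\mu_i\bar{\vee}\nu_i$ (which is the interpretation under which the Boolean matrix--vector inequality is meaningful). For each $i\in[1,n]_{\mathbb{N}}$, the $i$-th component of the left-hand side is ${\bm f}_i(\mu)\bar{\vee}{\bm f}_i(\nu)$, whereas the $i$-th component of the Boolean product on the right-hand side is $\bigvee_{j=1}^{n}\bigl((I_{{\bm f}})_{ij}\wedge(\mu_j\bar{\vee}\nu_j)\bigr)$. Since both sides take values in $\mathscr{D}$ and the order on $\mathscr{D}$ is implication, the Boolean inequality is trivial when ${\bm f}_i(\mu)={\bm f}_i(\nu)$; the remaining task is to exhibit, in the case ${\bm f}_i(\mu)\neq{\bm f}_i(\nu)$, some coordinate $j$ with $(I_{{\bm f}})_{ij}=1$ and $\mu_j\neq\nu_j$.

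To locate such a coordinate, I would run a standard coordinate-interpolation argument between $\mu$ and $\nu$. Define $w^{(0)}:=\mu$ and, for $k=1,2,\ldots,n$, let $w^{(k)}$ agree with $\nu$ on its first $k$ entries and with $\mu$ on the remaining $n-k$ entries; in particular $w^{(n)}=\nu$. Since ${\bm f}_i(w^{(0)})\neq{\bm f}_i(w^{(n)})$, telescoping along the chain yields an index $k$ with ${\bm f}_i(w^{(k-1)})\neq{\bm f}_i(w^{(k)})$. These two consecutive vectors can differ only in their $k$-th coordinate, so they must actually differ there, forcing $\mu_k\neq\nu_k$; moreover the pair $(w^{(k-1)},w^{(k)})$ is exactly the witness demanded by Definition \ref{def-functional}, which shows that ${\bm x}_k$ is a functional variable of ${\bm f}_i$ and hence $(I_{{\bm f}})_{ik}=1$. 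This supplies the required disjunct on the right-hand side.

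Assembling these componentwise conclusions delivers \eqref{equ-discrete-differ}. The only mildly delicate point is the penultimate step: one has to verify that the specific coordinate $k$ surfaced by the interpolation provides a valid instantiation of the vector $(\eta_1,\ldots,\eta_n)$ in Definition \ref{def-functional}. Because that definition asks only for the \emph{existence} of some such witness and the interpolation produces one explicitly, this check reduces to a bookkeeping observation rather than a substantive obstacle, making the proof essentially routine once the interpolation chain has been set up.
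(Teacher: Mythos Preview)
Your argument is correct. The coordinate-interpolation chain $w^{(0)},\ldots,w^{(n)}$ and the telescoping step cleanly produce an index $k$ with $\mu_k\neq\nu_k$ and $(I_{\bm f})_{ik}=1$, which is precisely what the componentwise Boolean inequality requires; the bookkeeping linking this witness back to Definition~\ref{def-functional} is indeed routine. Your observation that $\text{dist}(\mu,\nu)$ must be read as the Boolean \emph{vector} with entries $\mu_i\bar{\vee}\nu_i$ (rather than the scalar the $\Sigma$ in the displayed definition might suggest) is also the right reading, since the matrix--vector product $I_{\bm f}\times_{\mathscr{B}}\text{dist}(\mu,\nu)$ and its later use in Lemma~\ref{lem-within-time} only make sense that way.

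As for comparison with the paper: there is nothing to compare. The paper does not prove this lemma; it simply cites it from \cite{robert2012discrete}. Your proof is essentially the standard one from that source, so you have supplied what the paper omits.
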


\begin{lemma}\label{lem-within-time}
If the network structure ${\bf G}$ of BN (\ref{equ-BN-pin}) is acyclic and satisfies $\text{diam}({\bf G})=\tau$, it will be globally stabilized within time $\tau$.
\end{lemma}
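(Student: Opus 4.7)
The plan is to chain the Robert-type Hamming distance inequality (\ref{equ-discrete-differ}) along the trajectory and then convert the diameter hypothesis into the vanishing of a Boolean power of the incidence matrix. Fix two initial states $\mu,\nu\in\mathscr{D}^n$ and denote by ${\bm x}^{\mu}(t)$ and ${\bm x}^{\nu}(t)$ the trajectories of BN (\ref{equ-BN-pin}) starting from $\mu$ and $\nu$, respectively. Interpreting $\text{dist}(\cdot,\cdot)$ coordinate-wise as a Boolean vector, an induction on $t$ using (\ref{equ-discrete-differ}) together with the associativity of $\times_{\mathscr{B}}$ gives
\[
\text{dist}\!\bigl({\bm x}^{\mu}(t),{\bm x}^{\nu}(t)\bigr) \;\leq\; (I_{{\bm f}})^{t}\,\times_{\mathscr{B}}\, \text{dist}(\mu,\nu),
\]
where $(I_{{\bm f}})^{t}$ denotes the $t$-fold Boolean power of the incidence matrix of BN (\ref{equ-BN-pin}).

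Next I would use acyclicity together with $\text{diam}({\bf G})=\tau$ to force the relevant Boolean power of $I_{{\bm f}}$ to vanish. By the standard reachability identity, $[(I_{{\bm f}})^{t}]_{ij}=1$ iff some walk of length $t$ joins $v_{j}$ to $v_{i}$ in ${\bf G}$; because ${\bf G}$ is acyclic every walk is a simple path, whose length is at most $\tau$. Hence no such walk exists once $t$ surpasses $\tau$, which collapses the right-hand side of the displayed inequality to the zero vector at time $\tau$. Consequently ${\bm x}^{\mu}(\tau)={\bm x}^{\nu}(\tau)$ for every pair of initial states; call this common value $\alpha$.

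Finally I would upgrade this "merging" of trajectories to genuine stabilization by exploiting the autonomy of (\ref{equ-BN-pin}). Applying the merging property to $\mu$ versus ${\bm f}(\mu)$ yields ${\bm f}^{\tau}(\mu)=\alpha$ and ${\bm f}^{\tau+1}(\mu)={\bm f}^{\tau}({\bm f}(\mu))=\alpha$, whence ${\bm f}(\alpha)=\alpha$. Thus $\alpha$ is a fixed point and ${\bm x}^{\mu}(t)=\alpha$ for all $t\geq\tau$ and every $\mu\in\mathscr{D}^n$, which is exactly global stabilization within time $\tau$, sharpening Lemma \ref{lemma-globalconvergence} with an explicit time bound tied to the geometry of ${\bf G}$.

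The step I expect to be the main obstacle is aligning the exponent of the Boolean power with the paper's convention for the length of a path, so that $(I_{{\bm f}})^{\tau}$—rather than $(I_{{\bm f}})^{\tau+1}$—is the matrix that annihilates the Hamming distance; this governs whether the claimed bound is tight. Beyond this indexing subtlety, both the Robert-type contraction inequality and the reachability interpretation of Boolean matrix powers are standard, so the argument proceeds mechanically once the diameter-to-exponent correspondence is fixed.
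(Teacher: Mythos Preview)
Your argument is essentially the paper's: iterate inequality~(\ref{equ-discrete-differ}) to bound $\text{dist}({\bm f}^{t}(\mu),{\bm f}^{t}(\nu))$ by $(I_{{\bm f}})^{t}\times_{\mathscr{B}}\text{dist}(\mu,\nu)$, then use acyclicity and $\text{diam}({\bf G})=\tau$ to annihilate the Boolean power. The only organizational difference is that the paper first invokes Lemma~\ref{lemma-globalconvergence} to secure a fixed point $\alpha$ with ${\bm f}(\alpha)=\alpha$ and then takes $\nu=\alpha$ in the contraction estimate, whereas you run the estimate for arbitrary $\mu,\nu$, obtain the merging state $\alpha$, and recover ${\bm f}(\alpha)=\alpha$ afterwards by comparing $\mu$ with ${\bm f}(\mu)$. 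Your route is slightly more self-contained (it reproves Lemma~\ref{lemma-globalconvergence} along the way); the paper's route is shorter but leans on the cited result.

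Your flagged obstacle is exactly right, and it is not merely a convention issue you can wish away: with path length counted in edges, $(I_{{\bm f}})^{\tau}$ need not vanish while $(I_{{\bm f}})^{\tau+1}$ does, so the contraction argument as written delivers ${\bm f}^{\tau+1}(\mu)=\alpha$ rather than ${\bm f}^{\tau}(\mu)=\alpha$. The paper's proof makes precisely this computation (it shows $I_{{\bm f}}^{\tau+1}={\bf 0}$ and hence $\text{dist}(\alpha,{\bm f}^{\tau+1}(\mu))=0$) and then asserts ${\bm x}(\tau)=\alpha$ in its final line without further justification; the abstract, by contrast, states the bound as ``the length of the longest path \ldots\ plus one''. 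So the off-by-one you anticipated is present in the paper itself, and your proposal handles it as honestly as the original.
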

\begin{proof}
Lemma \ref{lemma-globalconvergence} indicates that BN (\ref{equ-BN-pin}) with acyclic network structure is globally stabilized at certain state. Without loss of generality, denoting this steady state by $\alpha\in\mathscr{D}^n$ and following (\ref{equ-discrete-differ}), one has that
\begin{equation*}\label{equ-inequality}
\text{dist}(\alpha,{\bm f}^{\tau+1}(\mu))=\text{dist}({\bm f}^{\tau+1}(\alpha),{\bm f}^{\tau+1}(\mu)) \leq I^{\tau+1}_{\bm f} \times_{\mathscr{B}} \text{dist}(\alpha,\mu).
\end{equation*}

Due to $\text{diam}({\bf G})=\tau$, $\tau+1$ is the minimal number $k$ such that $I^{k}_{\bm f}={\bf 0}_{n \times n}$, which further implies
$$\text{dist}(\alpha,{\bm f}^{\tau+1}(\mu)) \leq I^{\tau+1}_{\bm f} \times_{\mathscr{B}} \text{dist}(\alpha,\mu)=0,~\forall \mu\in\mathscr{D}^n,$$
that is, ${\bm x}(\tau)={\bm f}^{\tau}(\mu)=\alpha$. This completes the proof of this lemma.
\end{proof}

\section{Designing Distributed Pinning Controlers}\label{section-newmethod}
In this section, we shall design the distributed pinning controllers, which are originally introduced in \cite{zhongjie2019new} for the global stabilization and are further adapted in \cite{zhusy2020framework,zhusy2021controlgraph,zhusy2021sensor} for controllability and observability, to achieve the set stabilization of BN (\ref{equ-BN}). \textcolor[rgb]{1,0,0}{Here, we mainly extend the results in \cite{zhongjie2019new} to the set stabilizing controllers of BNs, but the controller form is more concise even if the problem is more general. Besides, the stabilizing time is also considered in this article.}

\subsection{Partition of System Nodes}\label{subsec-setpartition}
For a set ${\bm \Lambda}\subseteq\mathscr{D}^n$ given beforehand and for each node $k\in[1,n]_{\mathbb{N}}$, we define sets ${\bm \Lambda}_k^0=\left\{ {\bm x}\in {\bm \Lambda} \mid {\bm x}_k=0 \right\}$ and ${\bm \Lambda}_k^1=\left\{ {\bm x}\in {\bm \Lambda} \mid {\bm x}_k=1 \right\}$. \textcolor[rgb]{1,0,0}{Deleting the $k$th component of every state in sets ${\bm \Lambda}^0_k$ and ${\bm \Lambda}^1_k$ to obtain the sets $\tilde{{\bm \Lambda}}^0_k$ and $\tilde{{\bm \Lambda}}^1_k$, respectively, if $\tilde{{\bm \Lambda}}^0_k=\tilde{{\bm \Lambda}}^1_k$, the state of node $k$ is said to be arbitrary with respect to set ${\bm \Lambda}$; otherwise, it is said to be fixed.} Formally, subset $\Xi^{\textcolor[rgb]{1,0,0}{uf}}_{\bm \Lambda}$ can be expressed as
\begin{equation}
\Xi^{\textcolor[rgb]{1,0,0}{uf}}_{\bm \Lambda}:=\{ k\in\textcolor[rgb]{1,0,0}{[1,n]_{\mathbb{N}}} \mid \tilde{{\bm \Lambda}}_k^0 = \tilde{{\bm \Lambda}}_k^1\},
\end{equation}
with complementary set $\Xi^{f}_{\bm \Lambda}:=[1,n]_{\mathbb{N}}\backslash \Xi^{uf}_{\bm \Lambda}$ called the fixed-state nodes with respect to set ${\bm \Lambda}$. In this way, the so-called {\em ${\bm \Lambda}$-partition of system nodes} is presented as follows:
$$[1,n]_{\mathbb{N}}:=\Xi^{uf}_{\bm \Lambda} \cup \Xi^{f}_{\bm \Lambda}.$$

\textcolor[rgb]{1,0,0}{Since that the states of nodes in set $\Xi_{{\bm \Lambda}}^{f}$ should be fixed, assuming that $\Xi_{{\bm \Lambda}}^{f}:=\{\omega_1,\omega_2,\ldots,\omega_s\}$, the states of these nodes should be globally stabilized at $(\alpha_{\omega_1},\alpha_{\omega_2},\ldots,\alpha_{\omega_s})$ with $\alpha_{\omega_j}=r$ if and only if there exists a state $\gamma\in{\bm \Lambda}$ such that $\gamma_k=r$, $\forall r\in\mathscr{D}$.}

\subsection{Determining Pinned Nodes}
\textcolor[rgb]{1,0,0}{Having defined the ${\bm \Lambda}$-partition of system nodes, we now begin the design of distributed pinning controllers, which globally stabilize BN (\ref{equ-BN}) towards the given set ${\bm \Lambda}$.}

\textcolor[rgb]{1,0,0}{Here, we first sketch the main ideas behind this stabilizing controllers. The overall process is generally divided into the selection of pinned nodes and the design for state feedback controllers and logical couplings. To be specific, the pinned node set is composed of three parts of pinned nodes. In Part I, the pinned nodes are determined as those corresponding to the ending vertices of arcs directly connecting from set $\Xi^{uf}_{\bm \Lambda}$ to set $\Xi^{f}_{\bm \Lambda}$. The pinned nodes in the Part II are selected as those corresponding to the ending vertices of feedback arc set of the local network structure induced by set $\Xi^{f}_{\bm \Lambda}$. Finally, the pinned nodes in Phase III are picked as the remain nodes whose steady state should be adjusted. After determining the pinned nodes are determined, we design the state feedback controllers and logical couplings for the different parts of pinned nodes.}

\textbf{Part I:} Finding all the directed edges $\hat{e}_1,\hat{e}_2,\ldots,\hat{e}_{q}$ in digraph ${\bf G}$ with $\textbf{O}^{-}_{\hat{e}_i}\in \{v_j|j\in\Xi^{uf}_{\bm \Lambda}\}$ and $\textbf{O}^{+}_{\hat{e}_i}\in\{v_j|j\in\Xi^{f}_{\bm \Lambda}\}$, the Part I pinned node set $\hat{\psi}$ is determined as $\hat{\psi}:=\left\{j|v_j\in\{\textbf{O}^{+}_{\hat{e}_1}, \textbf{O}^{+}_{\hat{e}_2}, \ldots, \textbf{O}^{+}_{\hat{e}_q}\}\right\}$.

\textbf{Part II:} \textcolor[rgb]{1,0,0}{Consider the digraph $\hat{{\bf G}}:=(\hat{{\bf V}},\hat{{\bf E}})$ obtained by deleting edges $\hat{e}_1,\hat{e}_2,\ldots,\hat{e}_{q}$ from digraph ${\bf G}$, for which we define the subgraph $\hat{{\bf G}}_{1}$ induced by the node set $\Xi^{f}_{\bm \Lambda}$ as follows:}
$$ \hat{{\bf G}}_{1}=(\hat{{\bf V}}_{1},\hat{{\bf E}}_{1}):=(\hat{{\bf V}}_{1},{\bf E}\cap(\hat{{\bf V}}_{1} \times \hat{{\bf V}}_{1}))$$
with $\hat{{\bf V}}_{1}=\{v_j|j\in\Xi^{f}_{\bm \Lambda}\}$. In order to make the digraph $\hat{{\bf G}}$ be acyclic, we specially take care of the feedback arc set of digraph $\hat{{\bf G}}_1$. \textcolor[rgb]{1,0,0}{Since the minimum feedback vertex/arc set problem are both NP-hard, we only approximate the minimum feedback arc set $\{\breve{e}_{1},\breve{e}_{2},\ldots,\breve{e}_{l}\}$ of subgraph $\hat{{\bf G}}_1$ via the algorithm developed in \cite{even1998approximating}, which can be determined in time $O(ms^2)$ with $s:=|\hat{{\bf V}}_{1}|=|\Xi^{f}_{\bm \Lambda}|$ and $m=|\hat{{\bf E}}_{1}|$.} Accordingly, we select the pinned node set as the nodes corresponding to the ending vertices of edges $\breve{e}_{1},\breve{e}_{2},\ldots,\breve{e}_{l}$ as $\breve{\psi}:=\left\{j|v_j\in\{\mathbf{O}^{+}_{\breve{e}_{1}},\mathbf{O}^{+}_{\breve{e}_{2}},\dots,\mathbf{O}^{+}_{\breve{e}_{l}}\}\right\}$.

\textbf{Phase III:} Finally, the third part of pinned nodes is selected as the nodes in the set $\Xi^{f}_{\bm \Lambda}$ whose steady states of nodes should be modified. \textcolor[rgb]{1,0,0}{They are determined by checking whether or not equation (\ref{equ-fixedpointequ}) holds. If (\ref{equ-fixedpointequ}) does not hold for $j\in \Xi^f_{{\bm \Lambda}}\backslash(\breve{\psi}\cup\hat{\psi})$, node $j$ is pinned and is collected by $\vec{\psi}$:}
\begin{equation}\label{equ-fixedpointequ}
\begin{aligned}
\textcolor[rgb]{1,0,0}{\zeta}({\alpha}_{j})&=A_{j}\left(\ltimes_{i\in {\bf N}_{\textcolor[rgb]{1,0,0}{j}}}\textcolor[rgb]{1,0,0}{\zeta}({\alpha}_{i})\right),~j\in\textcolor[rgb]{1,0,0}{\Xi^f_{{\bm \Lambda}}}\backslash(\breve{\psi}\cup\hat{\psi}).
\end{aligned}
\end{equation}

Thus, the pinned nodes are totally collected by $\psi:=\breve{\psi}\cup\hat{\psi}\cup\vec{\psi}$.

\subsection{Designing Feedback Controllers and Logical Couplings}
In what follows, we shall devote to designing the state feedback controllers $\phi_j$ and logical couplings $\oplus_j$ for different pinned nodes.

As mentioned above, \textcolor[rgb]{1,0,0}{arcs $e_1,e_2,\dots,e_{q},\breve{e}_1,\breve{e}_2,\ldots,\breve{e}_l$ are desired to be removed from the network structure ${\bf G}$, wherein the directed edges ending with vertex $v_j$, $j\in \hat{\psi}\cup\breve{\psi}$ are recorded by a list} $\{ e_{\varpi_j^1,j},e_{\varpi_j^2,j},\ldots,e_{\varpi_j^{\sigma_j},j}\}$. Let lists $\tilde{\textbf{N}}^c_j:=\{ \varpi_j^1,\varpi_j^2,\ldots,\varpi_j^{\sigma_j}\}$ and $\textbf{N}_j:=\{\kappa_j^1,\kappa_j^2,\ldots,\kappa_j^{\epsilon_j}\}$. Due to $\tilde{\textbf{N}}_j^c\subseteq\textbf{N}_j$, we can assume that the orders of numbers $\varpi_j^1,\varpi_j^2,\ldots,\varpi_j^{\sigma_j}$ in the list $\textbf{N}_{j}$ are $\varsigma_j^1,\varsigma_j^2,\ldots,\varsigma_j^{\sigma_j}$, respectively. Defining matrix
\textcolor[rgb]{1,0,0}{\begin{equation}\label{equ-wj}
\begin{aligned}
W_j=&W_{[2,2^{\varsigma_j^{\sigma_j}-1}]}\ltimes W_{[2,2^{\varsigma_j^{\sigma_j-1}}]} \\
&~~~~~~~~~\ltimes W_{[2,2^{\varsigma_j^{\sigma_j-2}+1}]} \ltimes \cdots \ltimes W_{[2,2^{\varsigma_j^{1}+\sigma_j-2}]} \in\mathscr{L}_{2^{\epsilon_j}\times 2^{\epsilon_j}},
\end{aligned}
\end{equation}}
the product \textcolor[rgb]{1,0,0}{$\ltimes_{i\in{\bf N}_j}x_i(t)$ can be rewritten as}
\begin{equation}\label{equ-extract}
\ltimes_{i\in{\bf N}_j}x_i(t)=W_j \left(\ltimes_{i\in \hat{{\bf N}}^c_j}x_i(t)\right) \left(\ltimes_{i\in \hat{{\bf N}}_j}x_i(t)\right)
\end{equation}
with $\tilde{{\bf N}}_j:={\bf N}_j\backslash\tilde{{\bf N}}^c_j$.

\textcolor[rgb]{1,0,0}{With these pieces in place, we solve the feasible structure matrices $M_{\oplus_j}\in\mathscr{L}_{2 \times 4}$ and $S_{\phi_j}\in\mathscr{L}_{2 \times 2^{\mid {\bf N}_j \mid}}$ of $\oplus_j$ and $\phi_j$, respectively. Consider a pinned node $j\in\hat{\psi}\cup\breve{\psi}$.} Variables ${\bm x}_{\varpi_j^i}(t)$, $i\in[1,\sigma_j]_{\mathbb{N}}$ would not be functional in the controlled nodal dynamics $\tilde{{\bm f}}_j:={\bm u}_j\oplus_j {\bm f}_j([{\bm x}_i]_{i\in \textbf{N}_j})$, for all $j\in \hat{\psi}\cup\breve{\psi}$. Substituting (\ref{equ-extract}) into the nodal dynamics $\hat{{\bm f}}_j$ follows that
\begin{equation}\label{equ-nonvariables}
x_j(t+1)=S_{\tilde{{\bm f}}_j}\ltimes_{i\in\textbf{N}_j}x_i(t)=S_{\tilde{{\bm f}}_j}W_j(\ltimes_{i\in\hat{\textbf{N}}^{c}_j}x_i(t))(\ltimes_{i\in\hat{\textbf{N}}_j}x_i(t)).
\end{equation}

By Lemma \ref{lem-extract}, we know that the matrix $S_{\tilde{{\bm f}}_j}W_j$ in (\ref{equ-nonvariables}) can be written as $S_{\tilde{{\bm f}}_j}W_j=\tilde{A}_j(I_{2^{\mid \tilde{\textbf{N}}_j \mid}}\otimes {\bf 1}^\top_{2^{\mid \tilde{\textbf{N}}^c_{j} \mid}})$ with $\tilde{A}_j\in\vec{\mathscr{L}}^c_{2\times 2^{\mid \tilde{\textbf{N}}_j \mid}}$ \textcolor[rgb]{1,0,0}{and $\tilde{A}_j(\ltimes_{i\in\tilde{{\bf N}}_j} \alpha_i)=\alpha_j$. Thus, matrix $S_{\tilde{{\bm f}}_j}$ can be expressed as $S_{\tilde{{\bm f}}_j}=\tilde{A}_j(I_{2^{\mid \tilde{\textbf{N}}_j \mid}}\otimes {\bf 1}^\top_{2^{\mid \tilde{\textbf{N}}^c_{j} \mid}})W_j^\top$.}

On the other hand, the algebraic form of
\begin{equation}
{\bm x}_j(t+1)={\bm u}_j(t)\oplus_j {\bm f}_j([{\bm x}_i(t)]_{i\in \textbf{N}_j}),~j\in\hat{\psi}\cup\breve{\psi}
\end{equation}
can be represented as
\begin{equation*}
\begin{aligned}
x_j(t+1)&=M_{\oplus_j}\hat{u}_j(t) S_{{\bm f}_j}(\ltimes_{i\in \textbf{N}_j}x_i(t))\\
&=M_{\oplus_j}S_{\phi_j}(\ltimes_{i\in \textbf{N}_j}x_i(t)) S_{{\bm f}_j}(\ltimes_{i\in \textbf{N}_j}x_i(t))\\
&=M_{\oplus_j}S_{\phi_j} (I_{2^{\mid \textbf{N}_j \mid}}\otimes S_{{\bm f}_j}) \Phi_{2^{\mid \textbf{N}_j \mid}} (\ltimes_{i\in \textbf{N}_j}x_i(t)).
\end{aligned}
\end{equation*}
To compute the unknown structure matrices $M_{\oplus_j}$ and $S_{\phi_j}$, we can establish the following matrix equation:
\begin{equation}\label{equ-logicalequationI}
M_{\oplus_j}S_{\phi_j} (I_{2^{\mid \textbf{N}_j \mid}}\otimes S_{{\bm f}_j}) \Phi_{2^{\mid \textbf{N}_j \mid}} = \tilde{A}_j(I_{2^{\mid \textbf{N}_j \mid}}\otimes {\bf 1}^\top_{2^{\mid \textbf{N}^c_{j} \mid}})W^\top_j.
\end{equation}

Then, considering pinned node set $\vec{\psi}\backslash(\hat{\psi}\cup\hat{\psi})$. Finding matrices $\tilde{A}_{j}\in\vec{\mathscr{L}}^c_{2\times 2^{\mid \textbf{N}_j \mid}}$ satisfying (\ref{equ-fixedpointequ}), structure matrices $M_{\oplus_j}$ and $S_{\phi_j}$ for this part of pinned nodes can be derived from the following equations:
\begin{equation}\label{equ-logicalequationIII}
\begin{aligned}
M_{\oplus_j}S_{\phi_j} (I_{2^{\mid \textbf{N}_j \mid}}\otimes S_{{\bm f}_j}) \Phi_{2^{\mid \textbf{N}_j \mid}} = \tilde{A}_j,~j\in \vec{\psi}\backslash(\breve{\psi}\cup\hat{\psi}).
\end{aligned}
\end{equation}
with $M_{\oplus_{j}}\in\mathscr{L}_{2\times 4}$ and $S_{\phi_{j}}\in\vec{\mathscr{L}}^c_{2\times 2^{\mid \textbf{N}_j \mid}}$.

Finally, the set stabilizing controllers can be derived by conversely converting the structure matrices into their logical form.
\begin{theorem}
Given a set ${\bm \Lambda}\subseteq \mathscr{D}^n$, the pinning controlled BN (\ref{equ-BN-pin}) will be globally ${\bm \Lambda}$-stabilized.
\end{theorem}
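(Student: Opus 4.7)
The plan is to reduce the problem to showing that the controlled BN restricted to the fixed-state nodes $\Xi^{f}_{{\bm \Lambda}}$ is an autonomous acyclic subsystem with the correct steady state, and then to invoke Lemma \ref{lemma-globalconvergence}. First I would verify that the matrix equations (\ref{equ-logicalequationI}) and (\ref{equ-logicalequationIII}) are feasible: their right-hand sides are logical matrices with two rows, while the left-hand side has the free design parameters $M_{\oplus_j}\in\mathscr{L}_{2\times 4}$ and $S_{\phi_j}\in\mathscr{L}_{2\times 2^{|{\bf N}_j|}}$, so any prescribed structure matrix of $\tilde{{\bm f}}_j$ is realizable. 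Fix such a solution for each pinned node $j\in\psi=\hat{\psi}\cup\breve{\psi}\cup\vec{\psi}$.

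Next I would translate the algebraic design back to the network structure. For $j\in\hat{\psi}\cup\breve{\psi}$, the converse direction of Lemma \ref{lem-extract} applied to the designed $S_{\tilde{{\bm f}}_j}=\tilde{A}_j(I_{2^{|\tilde{{\bf N}}_j|}}\otimes {\bf 1}^\top_{2^{|\tilde{{\bf N}}^c_j|}})W_j^\top$ implies that the functional variables of $\tilde{{\bm f}}_j$ are contained in $\tilde{{\bf N}}_j={\bf N}_j\setminus\tilde{{\bf N}}^c_j$, i.e., every arc in $\{\hat{e}_1,\ldots,\hat{e}_q,\breve{e}_1,\ldots,\breve{e}_l\}$ is removed from the controlled network structure. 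By the construction of Part I, no arc then runs from $\Xi^{uf}_{{\bm \Lambda}}$ into $\Xi^{f}_{{\bm \Lambda}}$, so the restriction of the controlled BN to $\Xi^{f}_{{\bm \Lambda}}$ evolves autonomously; by the construction of Part II, the induced subgraph on $\Xi^{f}_{{\bm \Lambda}}$ is acyclic. Lemma \ref{lemma-globalconvergence} then yields global stabilization of this autonomous subsystem at a unique state, and equation (\ref{equ-fixedpointequ}) together with the Phase III choice of $\tilde{A}_j$ for $j\in\vec{\psi}\setminus(\breve{\psi}\cup\hat{\psi})$ certifies that this unique steady state is exactly $(\alpha_{\omega_1},\alpha_{\omega_2},\ldots,\alpha_{\omega_s})$.

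To finish, I would use the defining property $\tilde{{\bm \Lambda}}^0_k=\tilde{{\bm \Lambda}}^1_k$ for every $k\in\Xi^{uf}_{{\bm \Lambda}}$: flipping the $k$-th coordinate of any state in ${\bm \Lambda}$ keeps it in ${\bm \Lambda}$, so by iterating the flip over all coordinates in $\Xi^{uf}_{{\bm \Lambda}}$ we conclude that every state whose $\Xi^{f}_{{\bm \Lambda}}$-coordinates equal $(\alpha_{\omega_1},\ldots,\alpha_{\omega_s})$ belongs to ${\bm \Lambda}$. Hence after the transient required by the acyclic subsystem on $\Xi^{f}_{{\bm \Lambda}}$, we have ${\bm x}(t)\in{\bm \Lambda}$ for every initial condition, which is exactly ${\bm \Lambda}$-stabilization in the sense of Definition \ref{def-setstability}. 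The main obstacle I expect is Step~2, namely cleanly transferring the STP-algebraic design back to a statement about the Boolean functionality of variables as in Definition \ref{def-functional}; the permutation bookkeeping embodied in the matrix $W_j$ of (\ref{equ-wj}) must be checked carefully so that the variables declared nonfunctional in $\tilde{{\bm f}}_j$ correspond precisely to the arcs $\hat{e}_i,\breve{e}_i$ we intended to delete, and this is where Lemma \ref{lem-extract} (in both directions) does the essential work.
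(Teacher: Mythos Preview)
Your proposal is correct and follows essentially the same route as the paper's own proof: isolate the subnetwork on $\Xi^{f}_{{\bm \Lambda}}$ via Lemma~\ref{lem-extract}, observe it is autonomous and acyclic, apply Lemma~\ref{lemma-globalconvergence}, and then verify via equation~(\ref{equ-fixedpointequ}) that the unique fixed point is $(\alpha_{\omega_1},\ldots,\alpha_{\omega_s})$. You add two details the paper leaves implicit---the feasibility of (\ref{equ-logicalequationI})--(\ref{equ-logicalequationIII}) and the coordinate-flipping argument showing that fixing the $\Xi^{f}_{{\bm \Lambda}}$-coordinates suffices for membership in ${\bm \Lambda}$---but these are refinements of the same argument rather than a different approach.
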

\begin{proof}
\textcolor[rgb]{1,0,0}{According to Lemma \ref{lem-extract}, we know that under the distributed pinning controllers designed above the subnetwork of} BN (\ref{equ-BN-pin}) induced by set $\Xi^f_{{\bm \Lambda}}$ would not involve the nodal information from the remain nodes. Besides, the network structure of this subnetwork is acyclic.

\textcolor[rgb]{1,0,0}{Using Lemma \ref{lemma-globalconvergence} to imply that subnetwork induced by set $\Xi_{{\bm \Lambda}}^f$ is globally stable, we complete the proof of this theorem by showing that its steady state is $(\alpha_{\omega_1},\alpha_{\omega_2},\cdots,\alpha_{\omega_s})$. For $j\in\Xi_{{\bm \Lambda}}^f\backslash(\vec{\psi}\cup\breve{\psi}\cup\hat{\psi})$, equation (\ref{equ-fixedpointequ}) holds so its state is desired.} Considering node $j\in\hat{\psi}\cup\breve{\psi}$, the resulting nodal dynamics will be
\begin{equation}\label{equ-thm-1}
\begin{aligned}
x_j(t+1)&=M_{\oplus_j}u_j(t) S_{{\bm f}_j} (\ltimes_{i\in \textbf{N}_j}x_i(t))\\
&=M_{\oplus_j}S_{\phi_j}(\ltimes_{i\in \textbf{N}_j}x_i(t)) S_{{\bm f}_j} (\ltimes_{i\in \textbf{N}_j}x_i(t))\\
&=M_{\oplus_j}S_{\phi_j} (I_{2^{\mid \textbf{N}_j \mid}}\otimes S_{{\bm f}_j}) \Phi_{2^{\mid \textbf{N}_j \mid}} (\ltimes_{i\in \textbf{N}_j}x_i(t))\\
&=\tilde{A}_j(\ltimes_{i\in \textbf{N}_j}x_i(t)).
\end{aligned}
\end{equation}
\textcolor[rgb]{1,0,0}{Plugging $(\alpha_{\omega_1},\alpha_{\omega_2},\cdots,\alpha_{\omega_s})$ into the right hand side of (\ref{equ-thm-1}), according to the selection of $\tilde{A}_j$, for $j\in\hat{\psi}\cup\breve{\psi}$, it satisfies (\ref{equ-fixedpointequ}). Consider the nodes in set $\vec{\psi}\backslash(\hat{\psi}\cup\breve{\psi})$. Along the same line, we also know that after control the nodal dynamics of nodes in set $\vec{\psi}$ will satisfy (\ref{equ-fixedpointequ}). Since the states of nodes in set $\Xi^{uf}_{\bm \Lambda}$ can be arbitrary, \textcolor[rgb]{1,0,0}{as a whole,} BN (\ref{equ-BN-pin}) will be globally ${\bm \Lambda}$-stabilized.}
\end{proof}

\begin{remark}\label{rmk-timebound}
By resorting to Lemma \ref{lemma-globalconvergence}, we can further consider the constraint of stabilizing time. Given the time bound $\tau$, the Part II pinned node set $\breve{\psi}$ can be sought by adding some external nodes and delete some of their incoming arcs except for the feedback arc set such that the resulting digraph is acyclic and satisfies $\text{diam}({\bf G})\leq\tau$. \textcolor[rgb]{1,0,0}{Made above tweak, the guideline can be implemented similarly to above design procedure.}
\end{remark}

\subsection{Comparing Remarks}
\textcolor[rgb]{1,0,0}{We end up this section by comparing the controllers designed above with those in \cite{liff2018tnnls,liht2017jfi3039,liurj2017neuro142,zhongjie2019new} to show our improvement.}

With $\kappa:=\mid {\bm \Lambda} \mid$, the ${\bm \Lambda}$-partition of system nodes spends $O(n)$ time. Consider the time complexity to determine the pinned nodes. In Part I, checking the reachability from set $\Xi^{uf}_{\bm \Lambda}$ to set $\Xi^{f}_{\bm \Lambda}$ in digraph ${\bf G}$ can be realized in time $O(n^2)$. Besides, the pinned nodes in Parts II and III can be determined in time $O(ms^2)$ and $O(n^2)$, respectively, \textcolor[rgb]{1,0,0}{which is lower than $O(2^{n})$ given in \cite{liff2018tnnls,liht2017jfi3039,liurj2017neuro142}. To calculate the state feedback controllers and logical couplings, (\ref{equ-logicalequationI}) and (\ref{fig-acyclic-I-II}) can be solved in time $O(s2^\varepsilon)$. Thus, the total time complexity is $O(ms^2+n^2+s2^\varepsilon)$, which breaks the barrier $O(s2^\varepsilon)$ of ASSR approach. The sparse connection of biological networks asserted in some empirical observations (see, e.g., \cite{jeong2000nature}) supports that our controllers can be carried out in a reasonable amount of time.} Finally, using the node-to-node message leads to that our approach is in the distributed form.

\textcolor[rgb]{1,0,0}{While the distributed pinning controllers in \cite{zhusy2020framework,zhusy2021controlgraph,zhusy2021sensor} concern with controllability or observability, we compare our method with that in \cite{zhongjie2019new} which aims to globally stabilize BN (\ref{equ-BN}) at a preassigned steady state.} The global set stabilization we consider here is more general than and covers the global stabilization, \textcolor[rgb]{1,0,0}{in which case Lemma \ref{lemma-globalconvergence} cannot be directly applied so that the set stabilization cannot be dealt with by the results in \cite{zhongjie2019new}.} The relation between the stabilizing time and the diameter of acyclic network structures is revealed \textcolor[rgb]{1,0,0}{(cf. Lemma \ref{lem-within-time} and Remark \ref{rmk-timebound})}. \textcolor[rgb]{1,0,0}{Last but not least, we design the state feedback controllers and logical couplings after finishing search all parts of pinned nodes. This is different from the approach in \cite{zhongjie2019new} and make the eventual stabilizing controllers more concise.}

\section{Biological Simulation}\label{section-example}
\subsection{T-LGL Survival Signal Networks}
In this subsection, \textcolor[rgb]{1,0,0}{we shall deal with the set stabilization of} the network model of T-LGL survival signal in large granular lymphocyte leukemia \cite{pnas-TLGL-29nodes}, where the node number is $n=29$. As established in \cite{pnas-TLGL-29nodes}, the logical dynamics of this network are presented as in (\ref{fig-exa}), \textcolor[rgb]{1,0,0}{in which we only give the abbreviation of each gene here and refer the readers to \cite{pnas-TLGL-29nodes} for survey}. Accordingly, its network structure can be depicted as in Fig. \ref{fig-exa-BN}. In this example, we take care of the states of IL15, PDGF, PI3K, TPL2, and SPHK and would like to globally stabilize these five nodes to $1,1,0,0,0$, respectively. It amounts to study the global ${\bm \Lambda}$-stabilization of (\ref{fig-exa}) with ${\bm \Lambda}=\{(1,\sharp,\cdots,\sharp,1,\sharp,0,\sharp,\sharp,0,0,\sharp,\cdots,\sharp)\mid \sharp\in\mathscr{D} \}$.
\begin{equation}\label{fig-exa}
\begin{array}{ll}
\text{IL15}:{\bm x}_1^+={\bm x}_1,      &\text{SPHK}:{\bm x}_{15}^+={\bm x}_{11}\vee {\bm x}_{16},\\
\text{RAS}:{\bm x}_2^+={\bm x}_1,       &\text{S1P}:{\bm x}_{16}^+={\bm x}_{15},\\
\text{ERK}:{\bm x}_3^+={\bm x}_2,       &\text{sFas}:{\bm x}_{17}^+={\bm x}_{15},\\
\text{JAK}:{\bm x}_4^+={\bm x}_1,       &\text{Fas}:{\bm x}_{18}^+=\overline{{\bm x}_{17}} \vee (\overline{{\bm x}_1}\wedge \overline{{\bm x}_{11}}),\\
\text{IL2RBT}:{\bm x}_5^+={\bm x}_1,      &\text{DISC}:{\bm x}_{19}^+= {\bm x}_{18},\\
\text{STAT3}:{\bm x}_6^+={\bm x}_4,      &\text{Caspase}:{\bm x}_{20}^+= \overline{{\bm x}_{1}}\wedge {\bm x}_{19},\\
\text{IFNGT}:{\bm x}_7^+={\bm x}_5\vee {\bm x}_6^+,      &\text{Apoptosis}:{\bm x}_{21}^+= {\bm x}_{20},\\
\text{FasL}:{\bm x}_8^+= &\text{LCK}:{\bm x}_{22}^+= {\bm x}_{1},\\
~~~[{\bm x}_6\wedge ({\bm x}_3\vee {\bm x}_5)] \vee {\bm x}_{14},&\text{MEK}:{\bm x}_{23}^+= {\bm x}_{2},\\
\text{PDGF}:{\bm x}_9^+={\bm x}_9,&\text{GZMB}:{\bm x}_{24}^+= {\bm x}_{4},\\
\text{PDGFR}:{\bm x}_{10}^+={\bm x}_9,&\text{IL2RAT}:{\bm x}_{25}^+= {\bm x}_{12},\\
\text{PI3K}:{\bm x}_{11}^+={\bm x}_{10},&\text{FasT}:{\bm x}_{26}^+= {\bm x}_{14},\\
\text{IL2}:{\bm x}_{12}^+=\overline{{\bm x}_4 \vee {\bm x}_{11}},&\text{RANTES}:{\bm x}_{27}^+= {\bm x}_{14},\\
\text{BcIxL}:{\bm x}_{13}^+=\overline{{\bm x}_4 \vee {\bm x}_{11}},&\text{A20}:{\bm x}_{28}^+= {\bm x}_{14},\\
\text{TPL2}:{\bm x}_{14}^+={\bm x}_{11},&\text{FLIP}:{\bm x}_{29}^+= {\bm x}_{11}.
\end{array}
\end{equation}

\begin{figure}
\centering
\subfigure[Network structure of survival signal network (\ref{fig-exa}).]{
\includegraphics[scale=0.35]{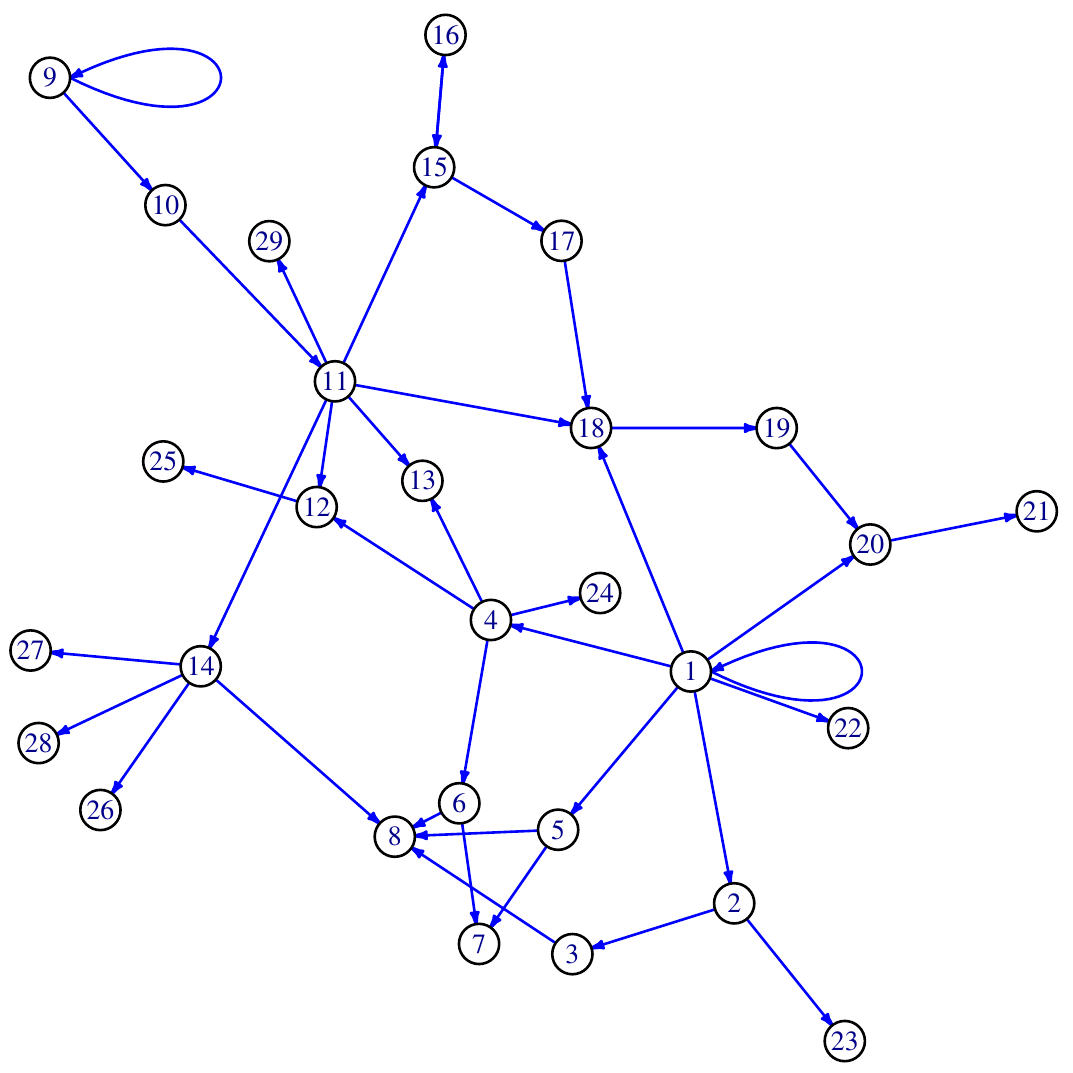}
\label{fig-exa-BN}
}
\quad
\subfigure[State transition graph of network (\ref{fig-exa}) with random $2000$ initial states.]{
\includegraphics[scale=0.35]{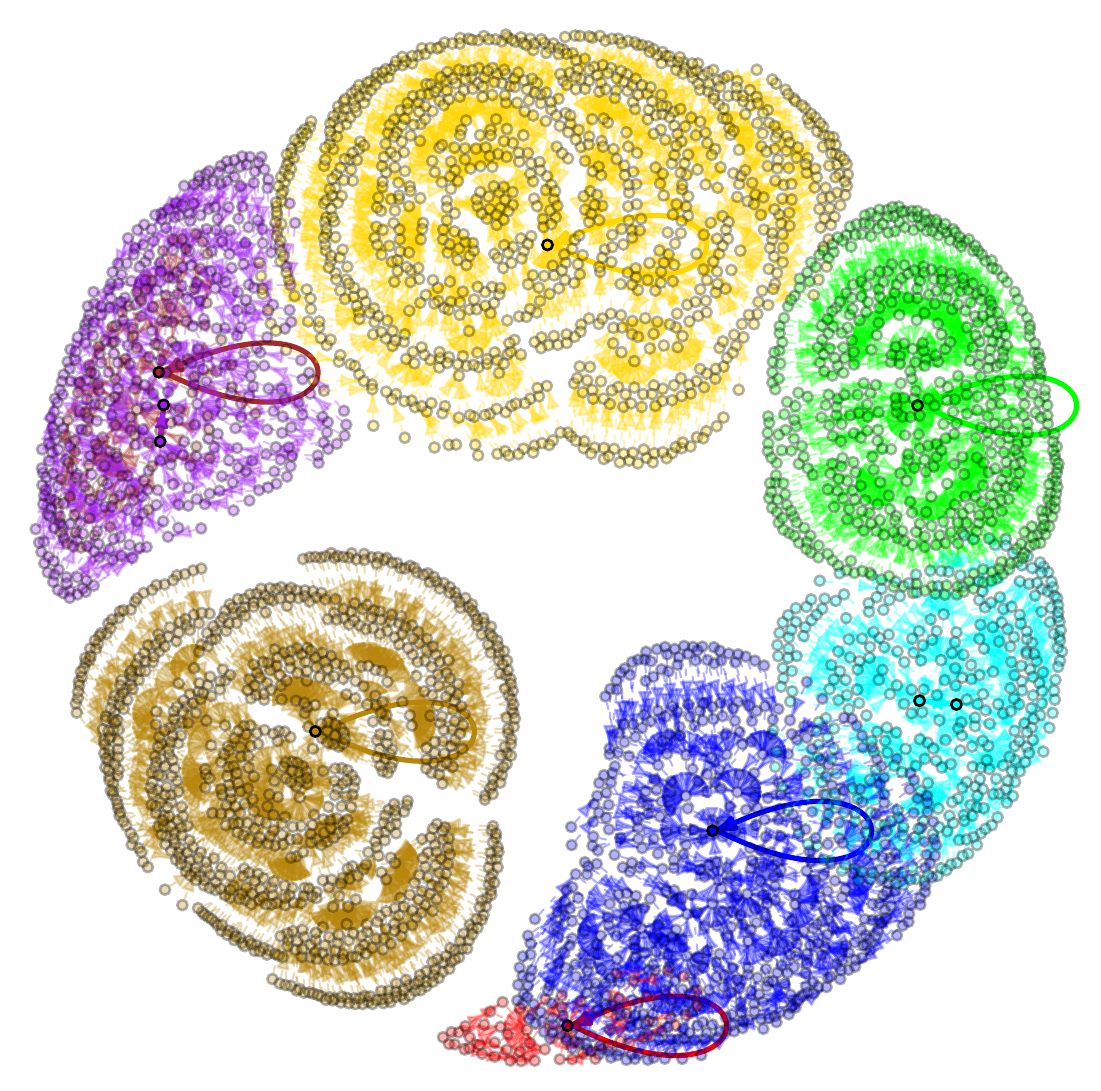}
\label{fig-exa-stg}
}
\caption{Network structure and state transition graph of BN (\ref{fig-exa}).}\label{fig-acyclic-I-II}
\end{figure}

First of all, we notice that, without any external control inputs, BN (\ref{fig-exa}) is not globally ${\bm \Lambda}$-stable. As ${\bm x}_1(t+1)={\bm x}_1(t)$, it claims that ${\bm x}_1(t)= 1$ if ${\bm x}_1(0)=1$ and ${\bm x}_1(t)= 0$ if ${\bm x}_1(0)=0$, $\forall t\in\mathbb{N}$. Thus, starting from any initial state with ${\bm x}_1(0)=0$, the state trajectory of BN (\ref{fig-exa}) will not enter the set ${\bm \Lambda}$ at any time instant. On the other hand, the state transition graph in Fig. \ref{fig-exa-stg} indicates that BN (\ref{fig-exa}) has eight attractors, some of which are not completely contained by set ${\bm \Lambda}$. In what follows, we shall be designing the distributed pinning controllers to achieve the global set stabilization of BN (\ref{fig-exa}).

Following the ${\bm \Lambda}$-partition in Subsection \ref{subsec-setpartition}, we can easily derive that $\Xi_{{\bm \Lambda}}^{f}=\{1,9,11,14,15\}$ and $\Xi^{uf}_{{\bm \Lambda}}=[1,n]_{\mathbb{N}}\backslash \Xi^{f}_{{\bm \Lambda}}$. Moreover, we can derive the states of nodes in set $\Xi_{{\bm \Lambda}}^f$ as $\alpha_{14}=\alpha_{15}=0$ and $\alpha_{1}=\alpha_{9}=\alpha_{11}=1$. From the network structure presented in Fig. \ref{fig-exa-BN}, it is noticed that the arcs from sets $\Xi_{\bm \Lambda}^{uf}$ to $\Xi^{f}_{\bm \Lambda}$ are only two edges $e_{10 \mapsto 11}$ and $e_{16 \mapsto 15}$. Thus, the Part I pinned nodes are collected by $\hat{\psi}:=\{11,15\}$. Besides, if we remove edges $e_{10 \mapsto 11}$ and $e_{16 \mapsto 15}$, the minimum feedback arc set of subgraph induced by $\Xi^f_{{\bm \Lambda}}$ is $e_{1 \mapsto 1}$ and $e_{9 \mapsto 9}$. Thus, Part II pinned node set is $\breve{\psi}:=\{1,9\}$. Since
$$ x_{14}(t)=\delta_2[1,2]x_{11}(t),$$
its next state is desired at state $(1,1,1,0,0)$. It implies that the Part III pinned node set is $\vec{\psi}:=\emptyset$. In total, the pinned node set is $\psi:=\{1,9,11,15\}$.

In what follows, the control inputs on these pinned nodes are computed. For nodes $1$, $9$ and $11$, the functional variable of their nodal dynamics is single, so the design of state feedback controllers and logical couplings can be easily derived as $\phi_1(t)=\overline{{\bm x}_1(t)}$, $\oplus_1=\vee$ and $\phi_9(t)=\overline{{\bm x}_9(t)}$, $\oplus_9=\vee$, $\phi_{11}(t)=\overline{x_{10}(t)}$ and $\oplus_{11}=\vee$. For node $15$, the structure matrix of its nodal dynamics can be given as $S_{{\bm f}_{15}}=\delta_2[1,1,1,2]$. If we plug $A_{15}=\delta_2[1,2]$, unknown matrices
$$M_{\hat{\oplus}_{15}}=\left(  \begin{array}{cccc} \alpha_1&\alpha_2&\alpha_3&\alpha_4\\ 1-\alpha_1&1-\alpha_2&1-\alpha_3&1-\alpha_4  \end{array} \right)$$
and
$$S_{\hat{\phi}_{15}}=\left(  \begin{array}{cccc} \beta_1&\beta_2&\beta_3&\beta_4\\ 1-\beta_1&1-\beta_2&1-\beta_3&1-\beta_4\end{array}\right)$$
into (\ref{equ-logicalequationI}), one has that
\begin{equation*}
\begin{array}{l}
\alpha_1\beta_1+\alpha_3(1-\beta_1)=1,~\textcolor[rgb]{1,0,0}{\alpha_1\beta_2+\alpha_3(1-\beta_2)=1,}\\
\textcolor[rgb]{1,0,0}{\alpha_1\beta_3+\alpha_3(1-\beta_3)=0,}~\alpha_2\beta_4+\alpha_4(1-\beta_4)=0.
\end{array}
\end{equation*}
One feasible solution of above equation is \textcolor[rgb]{1,0,0}{$\alpha_2=\alpha_3=\alpha_4=0$, $\alpha_1=1$, $\beta_1=\beta_2=1$, and $\beta_3=\beta_4=0$. It indicates that $\phi_{15}(t)= {\bm x}_{11}(t)$ and $\oplus_{15}=\wedge$.} To conclude, the controlled dynamics of pinned nodes are presented as follows:
\begin{equation*}
\left\{\begin{aligned}
&{\bm x}_1(t+1)= \overline{{\bm x}_1(t)} \vee {\bm x}_1(t),\\
&{\bm x}_9(t+1)= \overline{{\bm x}_9(t)} \vee {\bm x}_9(t),\\
&{\bm x}_{11}(t+1)= \textcolor[rgb]{1,0,0}{\overline{\overline{{\bm x}_{11}(t)}\vee {\bm x}_{11}(t)}} \wedge (\overline{{\bm x}_{11}(t)}\vee {\bm x}_{11}(t)),\\
&{\bm x}_{15}(t+1)={\bm x}_{11}(t)\wedge({\bm x}_{11}(t)\vee {\bm x}_{16}(t)).
\end{aligned}
\right.
\end{equation*}
For the resulting BN, its network structure and state transition graph are presented as in Figs. \ref{fig-ns-after} and \ref{fig-stg-after}, respectively. In Fig. \ref{fig-stg-after}, \textcolor[rgb]{1,0,0}{the unique attractor of the resulting BN is}
$$\textcolor[rgb]{1,0,0}{(11111111110000000110011100000)}\in{\bm \Lambda}.$$
Thus, the designed pinning controller can globally stabilize BN (\ref{fig-exa}) towards ${\bm \Lambda}$. \textcolor[rgb]{1,0,0}{Specially, the reason why we eventually attain a globally stable BN is that the original network structure in Fig. \ref{fig-exa-BN} only have two self-loops, which are both removed. Therefore, Lemma \ref{lemma-globalconvergence} indicates that the resulting BN would be globally stable.}

\textcolor[rgb]{1,0,0}{Furthermore, we proceed to consider the situation with constrained stabilizing time. Noting that the longest path of the subgraph induced by the node set $\{1,9,11,14,15\}$ is $2$, the stabilizing time of resulting BN is less than $3$. Thus, this case is not elaborated in this example.}

\begin{figure}
\centering
\subfigure[Network structure of BN (\ref{fig-exa}) under the designed pinning control.]{
\includegraphics[scale=0.35]{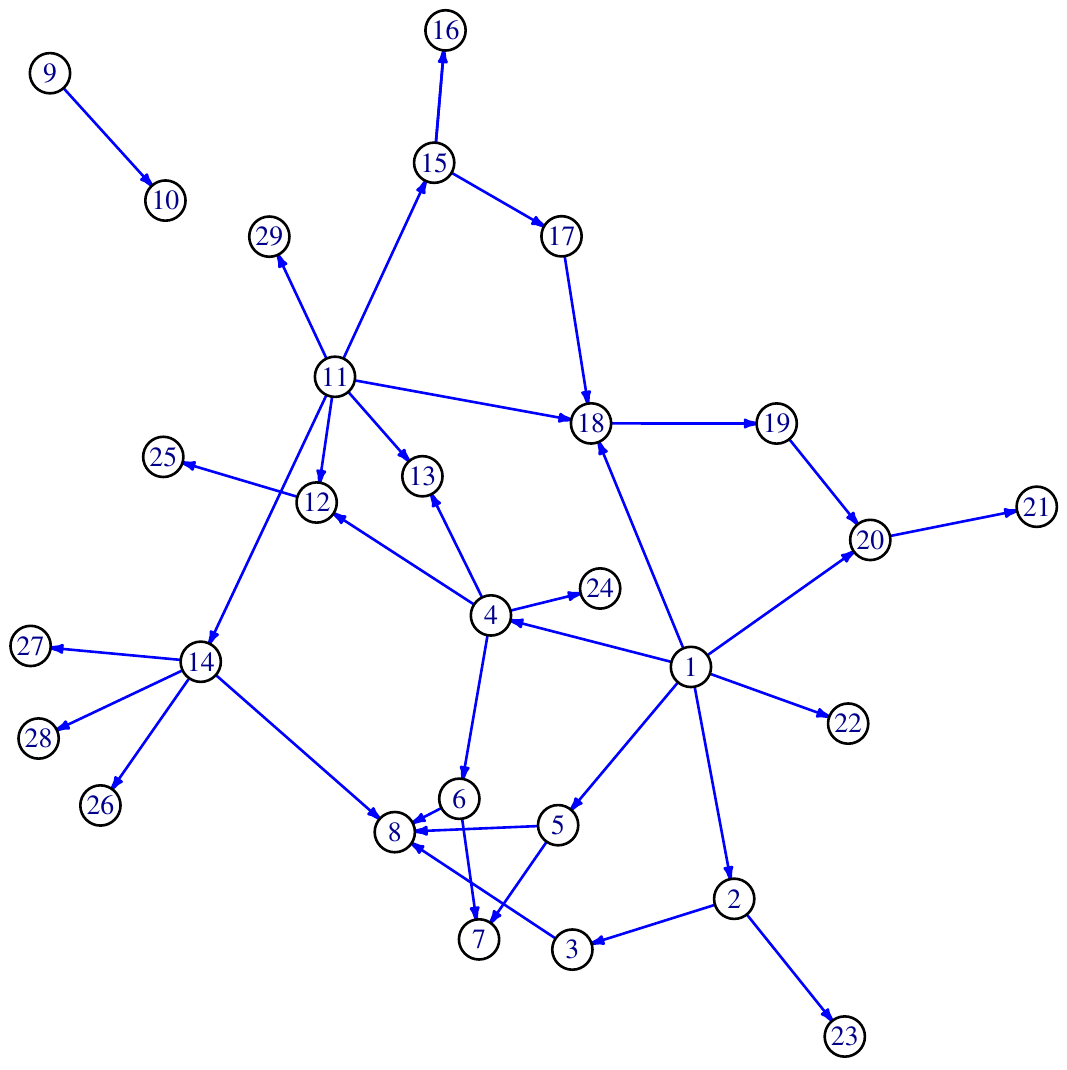}
\label{fig-ns-after}
}
\quad
\subfigure[State transition graph with random $1500$ initial states.]{
\includegraphics[scale=0.35]{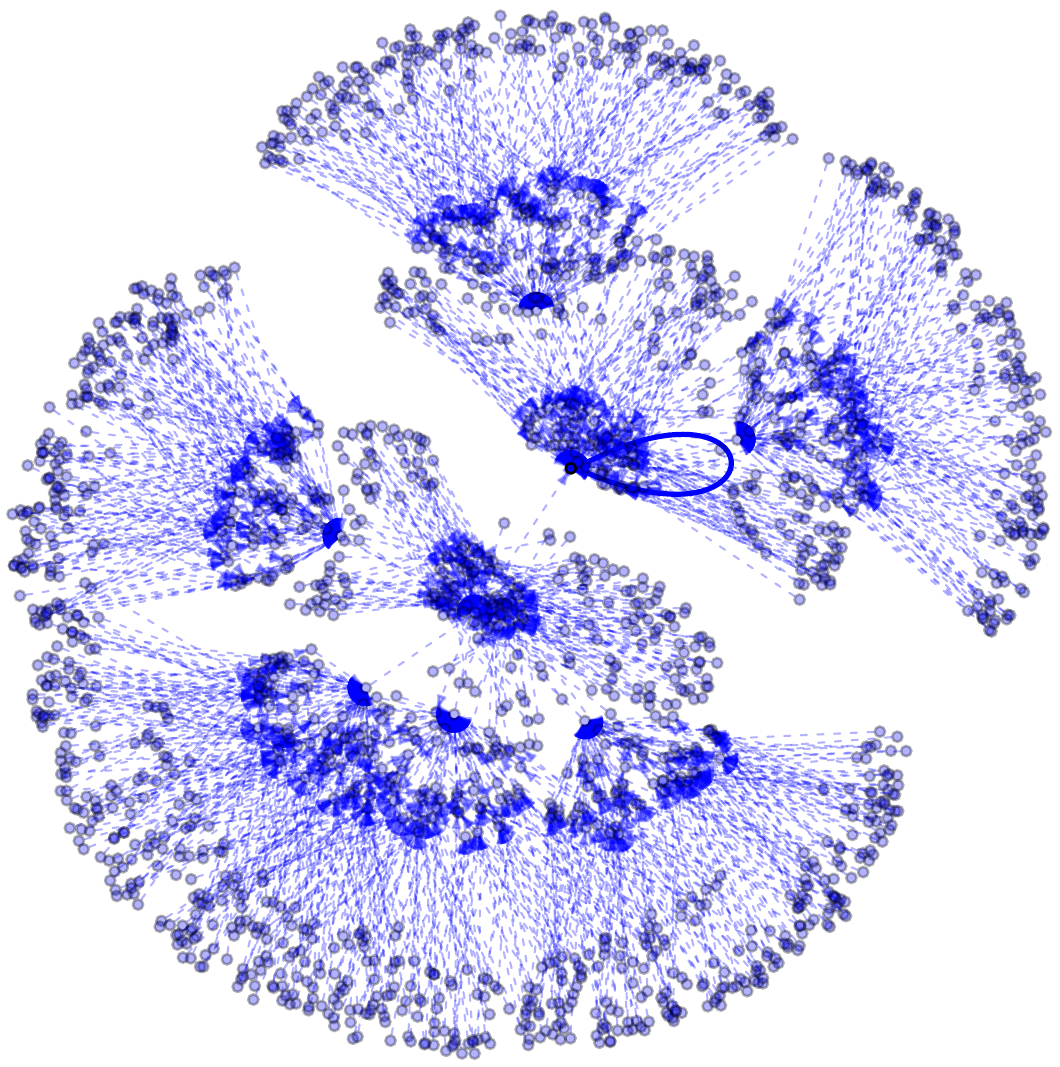}
\label{fig-stg-after}
}
\caption{Network structure and state transition graph of BN (\ref{fig-exa}) under the designed pinning controller.}\label{fig-acyclic-I-II}
\end{figure}

\subsection{\textcolor[rgb]{1,0,0}{T-Cell Receptor Signaling Networks}}
\textcolor[rgb]{1,0,0}{In this subsection, we shall turn to design the distributed pinning controller for a larger BN in cellular network, which was established in \cite{largestBN} and studied in \cite{zou2013algorithm} to describe the T-Cell Receptor Signaling Network. This model has $90$ nodes and we study the set stabilization of this network and refer the readers to \cite{largestBN} for the detailed nodal dynamics. In this example, we are interested in the states of nodes $63$, $64$, $65$, $66$, $67$, $68$, $69$, $70$ and $71$, and would like to globally stabilizing the states of all these nodes towards state $1$ within time $3$. Therefore, the desired stabilizing set here can be given as ${\bm \Lambda}=\{(\sharp,\sharp,\cdots,\sharp,1,1,1,1,1,1,1,1,1,\sharp,\sharp,\cdots,\sharp)\mid \sharp\in\mathscr{D}\}$.}

\begin{figure}[H]
\centering
\subfigure[Network structure.]{
\includegraphics[scale=0.33]{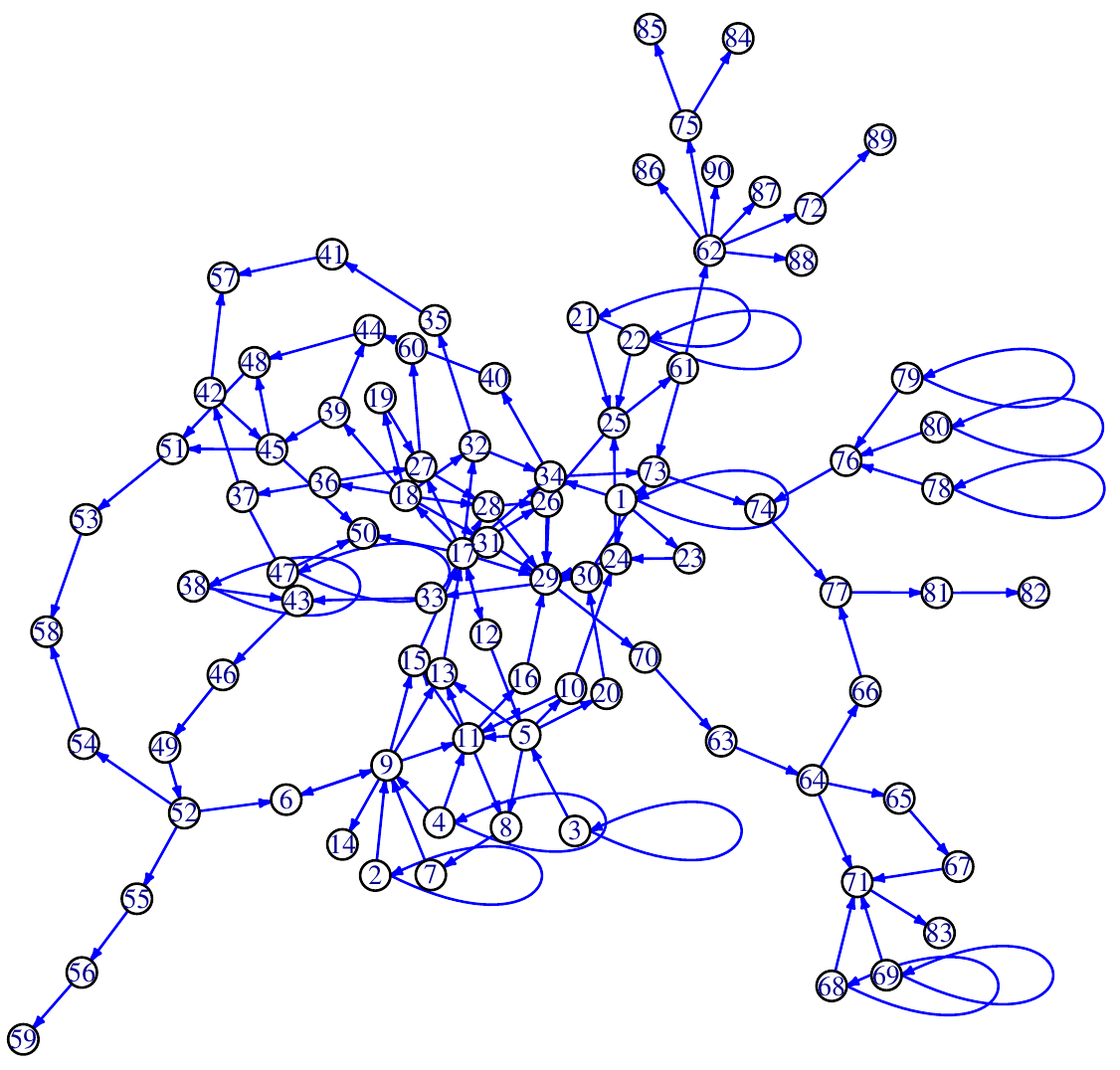}
\label{fig-ns-90nodes}
}
\quad
\subfigure[State transition graph with random $100$ initial states.]{
\includegraphics[scale=0.33]{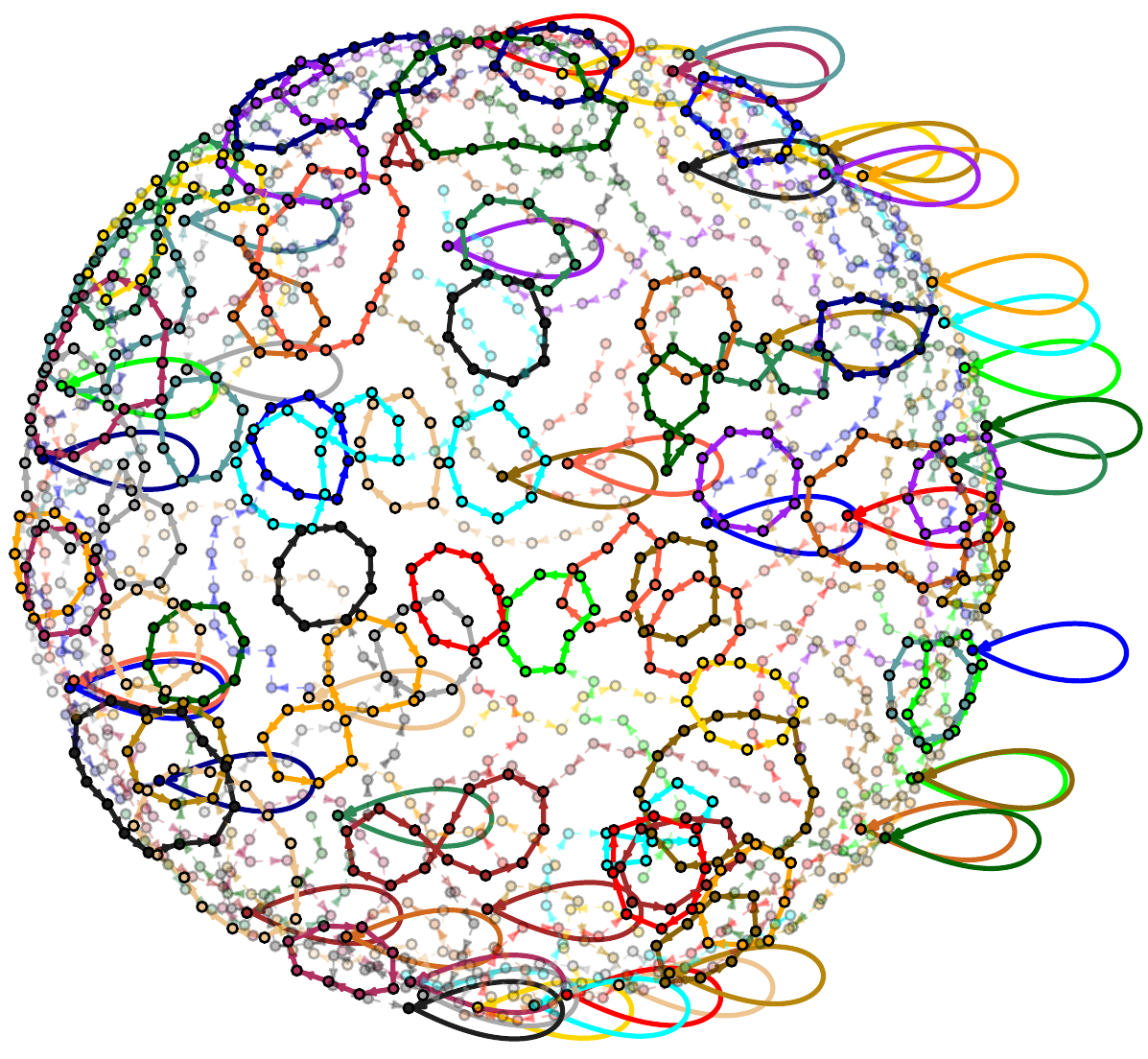}
\label{fig-stg-90nodes}
}
\caption{Network structure and state transition graph of $90$ nodes' BN.}\label{fig-90nodes}
\end{figure}

\textcolor[rgb]{1,0,0}{Given state transition graph in Fig. \ref{fig-stg-90nodes}, this BN is obviously not globally ${\bm \Lambda}$-stable. In what follows, we proceed to design the distributed pinning controller to globally stabilize BN towards set ${\bm \Lambda}$. Since all nodes $63$, $64$, $65$, $66$, $67$, $68$, $69$, $70$ and $71$ should be stabilized to $1$, one has that $\Xi_{{\bm \Lambda}}^{f}=\{63, 64, 65, 66, 67, 68, 69, 70, 71\}$.   }

\textcolor[rgb]{1,0,0}{In Part I, in order to disconnect the data flow from other nodes to set $\Xi_{{\bm \Lambda}}^{f}$, we should pin node $70$ and remove arc $e_{29\mapsto70}$. Thus, $\hat{\psi}=\{70\}$. Focusing on the subgraph induced by set $\Xi_{{\bm \Lambda}}^{f}$, there are only two self-loops $\{v_{68} \mapsto v_{68}\}$ and $\{v_{69} \mapsto v_{69}\}$. Hence, nodes $68$ and $69$ are pinned in Part II. Moreover, in the consideration of the stabilizing time being upper bounded by $3$, we also pin node $64$ here. Finally, checking whether or not equation (\ref{equ-fixedpointequ}) is satisfied. With the steady states of nodes $63$, $64$, $65$ and $66$ being desired, the steady states of nodes $67$ and $71$ should be modified, which leads to the pinned node set of this phase as $\vec{\psi}=\{67,71\}$. Hence, the total pinned node set is $\psi=\{67,68,69,70,71\}$.}

\textcolor[rgb]{1,0,0}{Since the nodal dynamics of nodes $64$, $67$, $68$, $69$ and $70$ only has one functional variable, the state feedback controller and logical coupling can be easily designed as
$$\begin{aligned}
&\textcolor[rgb]{1,0,0}{\phi_{64}=\overline{{\bm x}_{63}},\oplus_{64}=\vee,\phi_{67}={\bm x}_{65},\oplus_{67}=\vee,\phi_{68}=\overline{{\bm x}_{68}},\oplus_{68}=\vee,}\\
&\textcolor[rgb]{1,0,0}{\phi_{69}=\overline{{\bm x}_{69}},\oplus_{69}=\vee,\phi_{70}=\overline{{\bm x}_{29}},\oplus_{70}=\vee.}
\end{aligned}$$}

\textcolor[rgb]{1,0,0}{Consider node $71$, whose structure matrix can be calculated as
$$ S_{{\bm f}_{71}}=\delta_2[2,2,2,1,2,2,2,2,2,2,2,2,2,2,2,2].$$
Suppose that logical matrices $M_{\oplus_{71}}$ and $S_{\phi_{71}}$ in (\ref{equ-fixedpointequ}) as
$$M_{\oplus_{71}}=\left( \begin{array}{cccc} \alpha_1&\alpha_2&\alpha_3&\alpha_4\\1-\alpha_1&1-\alpha_2&1-\alpha_3&1-\alpha_4 \end{array} \right)$$
and
$$S_{\phi_{71}}=\left( \begin{array}{cccc} \beta_1&\beta_2&\cdots&\beta_{16}\\1-\beta_1&1-\beta_2&\cdots&1-\beta_{16} \end{array} \right).$$
Substituting above matrix $S_{{\bm f}_{71}}$ into (\ref{equ-fixedpointequ}) and selecting $\tilde{A}_{71}=\delta_2[2,2,2,1,2,2,2,2,2,2,2,2,2,2,2,2]$, we can establish the following equations:}
$$\textcolor[rgb]{1,0,0}{\begin{aligned}
&\alpha_2\beta_1+\alpha_4(1-\beta_1)=1,~\alpha_1\beta_4+\alpha_3(1-\beta_4)=0,\\
&\alpha_2\beta_i+\alpha_4(1-\beta_i)=0,~i=2,3,5,6,\cdots,16,
\end{aligned}}$$
\textcolor[rgb]{1,0,0}{of which one solution can be calculated as
$$\begin{aligned}
&\alpha_2=\alpha_3=1,~\alpha_1=\alpha_4=0,\\
&\beta_1=\beta_4=1,~\beta_2=\beta_3=\beta_5=\beta_6=\cdots=\beta_{16}=0.
\end{aligned}$$
Thus, the corresponding state feedback controller and logical coupling can be attained as $\phi_{71}={\bm x}_{64} \wedge {\bm x}_{67} \wedge {\bm x}_{69}$ and $\oplus_{71}=\bar{\vee}$.}

\textcolor[rgb]{1,0,0}{In summary, the pinning controlled BN can be presented as}
$$\begin{aligned}
&\textcolor[rgb]{1,0,0}{{\bm x}_{64}(t+1)=\overline{{\bm x}_{63}(t)}\vee {\bm x}_{63}(t),}\\
&\textcolor[rgb]{1,0,0}{{\bm x}_{67}(t+1)=x_{65}(t)\vee\overline{{\bm x}_{65}(t)},}\\
&\textcolor[rgb]{1,0,0}{{\bm x}_{68}(t+1)=\overline{{\bm x}_{68}(t)}\vee {\bm x}_{68}(t),}\\
&\textcolor[rgb]{1,0,0}{{\bm x}_{69}(t+1)=\overline{{\bm x}_{69}(t)}\vee {\bm x}_{69}(t),}\\
&\textcolor[rgb]{1,0,0}{{\bm x}_{70}(t+1)=\overline{{\bm x}_{70}(t)}\vee {\bm x}_{29}(t),}\\
&\textcolor[rgb]{1,0,0}{{\bm x}_{71}(t+1)=({\bm x}_{64}(t) \wedge {\bm x}_{67}(t) \wedge {\bm x}_{69}(t))} \\
&~~~~~~~~~~~~~~~~~~~~~~~~~~\textcolor[rgb]{1,0,0}{\bar{\vee} ({\bm x}_{64}(t) \wedge \overline{{\bm x}_{67}(t)}\wedge\overline{{\bm x}_{68}(t)}\wedge \overline{{\bm x}_{69}(t)})}.
\end{aligned}$$
\textcolor[rgb]{1,0,0}{To check the ${\bm \Lambda}$-stabilization of resulting BN, its state transition graph, specially that of subnetwork induced by set $\Xi_{\bm \Lambda}^f$, is drawn as in Fig. \ref{fig-90nodes-after}. From Fig. \ref{fig-stg-90nodes-local-after}, we can conclude that the pinning controlled BN will be globally ${\bm \Lambda}$-stabilized within time $2$.}
\begin{figure}
\centering
\subfigure[Global state transition graph.]{
\includegraphics[scale=0.33]{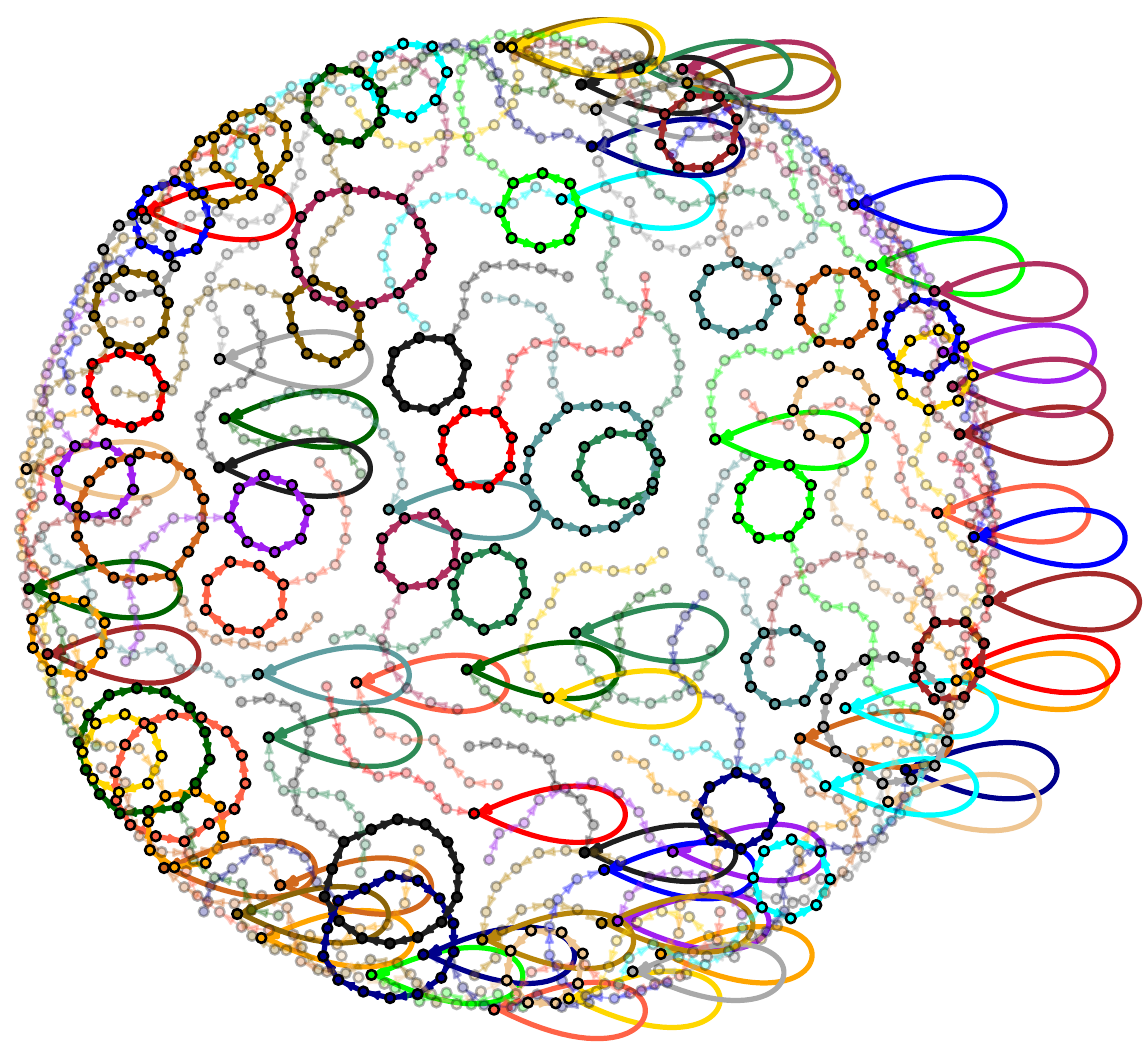}
\label{fig-stg-90nodes-global-after}
}
\quad
\subfigure[Local state transition graph.]{
\includegraphics[scale=0.33]{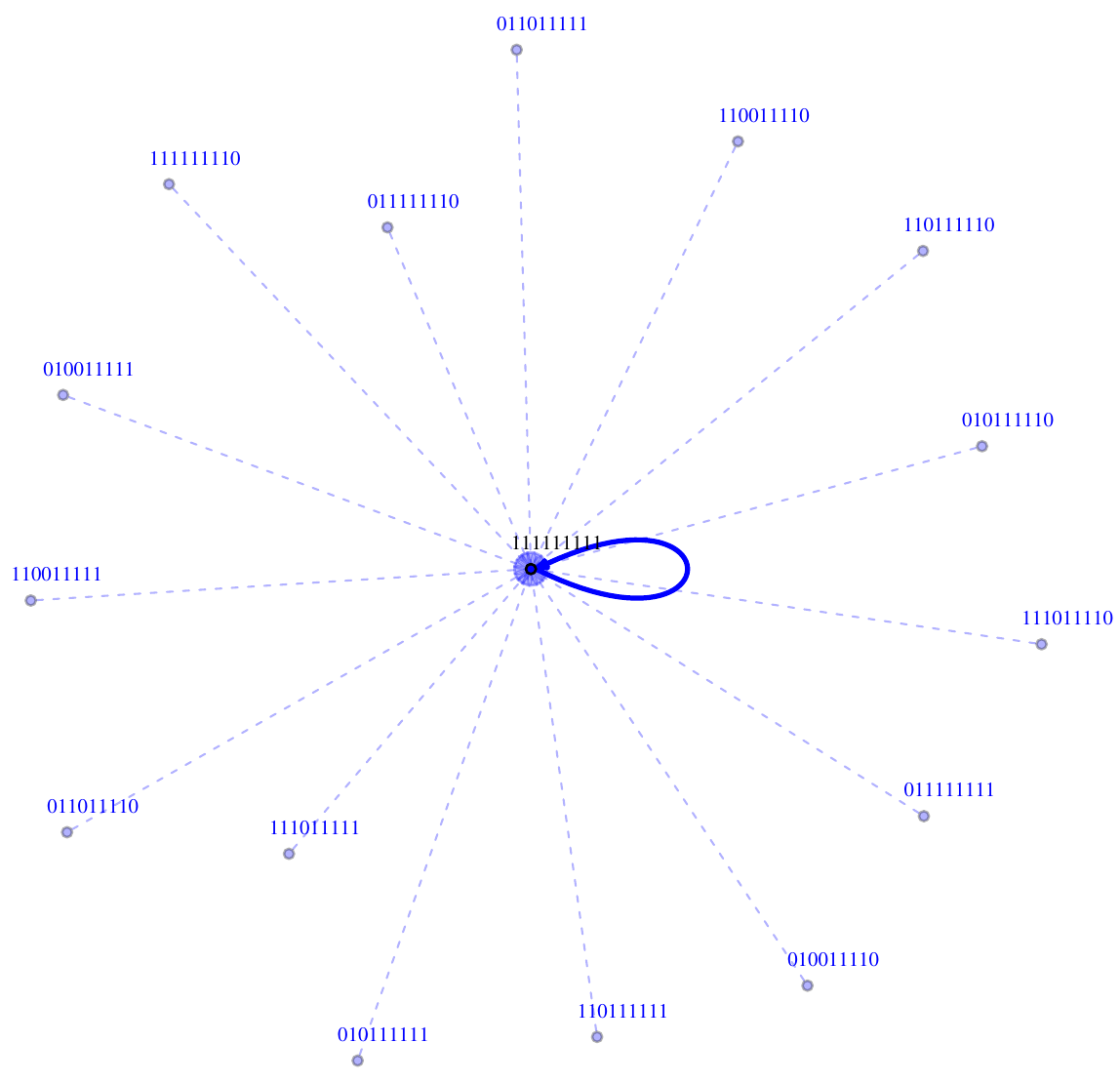}
\label{fig-stg-90nodes-local-after}
}
\caption{State transition graph of pinning controlled BN.}\label{fig-90nodes-after}
\end{figure}

\begin{remark}
\textcolor[rgb]{1,0,0}{In these two examples, compared with the traditional $L$-based pinning controller, the superiority of our method is apparent. In the first example, the overall pinned node set is $\{1,9,11,15\}$ and the largest in-degree of nodes therein is only $2$. They are respectively $\{64,67,68,69,70,71\}$ and $4$ in the second example. We only address ($2\times 4$)-dimensional and ($2\times 16$)-dimensional matrices respectively, whereas the traditional approach requires the network transition matrices of size $2^{29}\times 2^{29}=536870912 \times 536870912\gg 2\times 4$ and $2^{90}\times 2^{90}=1237940039285380274899124224 \times 1237940039285380274899124224 \gg 2\times 16$.}
\end{remark}

\section{Conclusion}\label{section-conclusion}
In this article, \textcolor[rgb]{1,0,0}{by using the global network structure and the node-to-node message change, the distributed pinning set stabilizing controllers have been designed for large-scale BNs with or without considering the stabilizing time, respectively}. After our improvement, the limitations of existing results, \textcolor[rgb]{1,0,0}{specially the high time complexity,} can be overcome to some extent. It is worthwhile to emphasize that here we pay more attention on reducing the time complexity of controllers design than pinning the less nodes. One can easily exchange thee order of selecting Part II and Part III pinned nodes to pin less nodes. However, we may pin the nodes with larger in-degree to yield a higher time complexity, which can be thought of as a cost of reducing the number of pinned nodes.

Moreover, it is stressed that in the problem of pinning control design, the synchronization \cite{lir2012tnnls840} and output regulation \cite{liht2017tac2993} of BNs cannot be solved by \textcolor[rgb]{1,0,0}{converting them into the set stabilization of another augmented BN and changing its augmented transition matrix, even if they are the variations of set stabilization. Thus, the design of pinning controllers for synchronization and output regulation is more difficult than that for set stabilization.} \textcolor[rgb]{1,0,0}{Our method presented here} is still not applicable for these two issues, \textcolor[rgb]{1,0,0}{which are left as further issues. The random version of such controllers is also interesting, where the asymptotic or finite-time behaviors of probabilistic BNs or Markovian jump BNs can be further studied.}

\end{document}